\theoremstyle{plain}
\newtheorem{theorem}{Theorem}[section]
\newtheorem{proposition}[theorem]{Proposition}
\newtheorem{corollary}[theorem]{Corollary}
\newtheorem{definition}[theorem]{Definition}
\theoremstyle{definition}
\newtheorem{example}[theorem]{Example}
\newtheorem{notation}[theorem]{Notation}
\newcommand{\flow}{\mathsf{flow}}
\newcommand{\jump}{\mathsf{jump}}
\newcommand{\inv}{\mathsf{inv}}
\newcommand{\init}{\mathsf{init}}
\newcommand{\reach}{\mathsf{Reach}}
\newcommand{\unsafe}{\mathsf{unsafe}}
\newcommand{\p}{\mathsf{P}}
\newcommand{\np}{\mathsf{NP}}
\newcommand{\enforce}{\mathsf{enforce}}
\newcommand{\lrf}{\mathcal{L}_{\mathbb{R}_{\mathcal{F}}}}
\newcommand{\dom}{\mathsf{dom}}
\newcommand{\der}{\mathrm{d}}
\title{\LARGE \bf Revisiting the Complexity of Stability of Continuous and Hybrid Systems}
\author{Sicun Gao \and Soonho Kong \and Edmund M. Clarke}
\begin{document}

\maketitle
\thispagestyle{empty}
\pagestyle{empty}

\begin{abstract}
We develop a general framework for obtaining upper bounds on the ``practical" computational complexity of stability problems, for a wide range of nonlinear continuous and hybrid systems. To do so, we describe stability properties of dynamical systems in first-order theories over the real numbers, and reduce stability problems to the $\delta$-decision problems of their descrptions. The framework allows us to give a precise characterization of the complexity of different notions of stability for nonlinear continuous and hybrid systems. We prove that bounded versions of the $\delta$-stability problems are generally decidable, and give upper bounds on their complexity. The unbounded versions are generally undecidable, for which we measure their degrees of unsolvability. 
\end{abstract}

\section{Introduction}

Stability of dynamical systems is a central topic in control theory. The computational nature of stability properties has been a topic of much recent investigation~\cite{DBLP:journals/corr/AhmadiP13,DBLP:journals/automatica/BlondelT99,DBLP:journals/automatica/BlondelT00,AAAthesis,DBLP:conf/hybrid/PrabhakarV13,DBLP:journals/corr/abs-1210-7420}. A focus of existing work is to establish various hardness results, i.e., lower bounds on complexity. It is shown that stability of simple systems is hard or impossible to solve algorithmically. Such results are proved by reducing combinatorial problems over graphs or matrices to stability problems, which can be analyzed with techniques of standard complexity theory. A limitation is that reduction techniques are usually not suitable for establishing upper bounds on complexity, and indeed most questions about upper bounds are open~\cite{AAAthesis}. 

Note that the existing approaches measure complexity of stability through {\em symbolic} manipulation of their descriptions. While doing so is suitable for establishing hardness results (for subclasses of the systems), it is at the core different from the practice in control theory, which is mostly based on {\em numerical} computations over real numbers. We argue that more general results for complexity of stability need to take into account of how real numbers and real functions are computed, as studied in computable analysis~\cite{CAbook,Kobook,16557}. However, while complexity for real functions is best measured in the model of Type 2 Turing machines~\cite{Kobook}, stability problems are still standard decision problems that should be measured in the standard complexity classes. Moreover, it is important to distinguish the difficulty with manipulating real numbers from the intrinsic complexity of control problems. For instance, if a real number {\em x} is represented numerically (as an infinite Cauchy sequence of rationals), determining whether "$x=0$" is already undecidable~\cite{CAbook}. Using such hardness in measuring complexity of practical control problems would be misleading, because in practice, the problems are always solved up to some nonzero error bound. That is, $|x|<\delta$ for a sufficiently small $\delta$ is what we need in practice, rather than the theoretically undecidable equality testing. The computational nature of the problem is very different with such a relaxation. 

We will show that $\delta$-decisions over the real numbers~\cite{DBLP:conf/lics/GaoAC12,DBLP:conf/cade/GaoAC12} provides a suitable framework for measuring the intrinsic complexity of control problems to address the issues discussed above. Within this framework, we can study the following version of stability problems. Given a dynamical system and an arbitrarily small $\delta\in \mathbb{Q}^+$, we ask for one of the following answers:
\begin{itemize}
\item The system is stable. 
\item The system is unstable under numerical perturbations bounded by $\delta$. 
\end{itemize}
We call this the $\delta$-stability problem. With this definition, we are able to give precise upper bounds for the ``practical complexity" of stability problems for a wide range of continuous and hybrid systems. We are able to prove results of the following type:
\begin{itemize}
\item Bounded Lyapunov $\delta$-stability resides in the complexity class $\mathsf{(\Pi^P_3)^C}$, where $\mathsf{C}$ is the complexity of continuous functions in the system. ($\mathsf{\Pi^P_3}$ denotes the complexity class in the polynomial hierarchy).
\item Bounded asymptotic $\delta$-stability resides in the complexity class $\mathsf{(\Sigma^P_4)^C}$. 
\item Unbounded Lyapunov $\delta$-stability is undecidable, whose degree of undecidability is $\mathsf{\Pi^0_1}$. Unbounded asymptotic $\delta$-stability is undecidable, whose degree of undecidability is in $\mathsf{\Sigma^0_2}$.
\item Lyapunov methods reduce problems into lower complexity classes such as $\mathsf{(\Sigma_2)^C}$.  
\end{itemize}
We believe these results are the first general characterization of the complexity of stability. Moreoever, the importance of the results is not just theoretical. The past decade has seen great advancement in decision procedures (SAT, QBF, and SMT solvers) that can handle many large instances of $\np$-hard problems. The complexity analysis shows the possibility of developing generic algorithmic approaches to control problems of nonlinear and hybrid systems.

In all, the main contributions of the paper are as follows:
\begin{itemize}
\item We define a framework for measuring the ``practical complexity" of stability problems for a wide range of nonlinear continuous and hybrid systems. To do so, we describe stability properties of systems as first-order formulas over the real numbers, and reduce stability problems to the $\delta$-decision problems of these formulas. 
\item The framework allows us to obtain a precise characterization of the complexity of different notions of stability that has not been discovered previously. We prove that bounded version of the stability problems are generally decidable, and give upper bounds on their complexity. The unbounded versions are generally undecidable, for which we measure their degrees of unsolvability. 
\end{itemize}
The paper is organized as follows. In Section II, we review definitions of complexity classes and some main results from computable analysis. In Section III, we review the theory of $\delta$-decisions over the reals and introduce the logic language that can encode a wide range of dynamical systems and properties. In Section IV, we study the complexity of stability of continuous systems. In Section V, we study the same questions for hybrid systems. We conclude in Section VI and suggest future directions. 

\section{Preliminaries}
\subsection{Oracle Machines, Polynomial and Arithmetic Hierarchies}

We review the basic definitions for complexity hierarchies. 

A {\em (set-) oracle Turing machine} $M$ extends an ordinary Turing machine with a special read/write tape called the {\em oracle tape}, and three special states $q_{\mathit{query}}$, $q_{\mathit{yes}}$, $q_{\mathit{no}}$. To execute $M$, we specify an oracle language $O\subseteq \{0,1\}^*$ in addition to the input $x$. Whenever $M$ enters the state $q_{\mathit{query}}$, it queries the oracle $O$ with the string $s$ on the oracle tape. If $s\in O$, then $M$ enters the state $q_{\mathit{yes}}$, otherwise it enters $q_{\mathit{no}}$. Regardless of the choice of $O$, a membership query to $O$ counts only as a single computation step. A {\em function-oracle Turing machine} is defined similarly except that when the machine enters the query state the oracle (given by a function $f:\{0,1\}^*\rightarrow\{0,1\}^*$) will erase the string $s$ on the query tape and write down $f(s)$. Note that such a machine must take $|f(s)|$ steps to read the output from the query tape. We write $M^O(x)$ (resp. $M^f(x)$) to denote the output of $M$ on input $x$ with oracle $O$ (resp. $f$). 

The polynomial hierarchy $\mathsf{PH}$ is a hierarchy of complexity classes that is defined through oracle computation. The base case are the well-known complexity classes $\mathsf{P}$ and $\np$. The classes in the hierarchy are recursively defined in the standard way:  
$$
\mathsf{\Sigma_0^P} = \mathsf{\Pi_0^P} = \mathsf{P}, \mathsf{\Sigma_{k+1}^P(A)} =\mathsf{NP^{\Sigma_k^P(A)}}, \mathsf{\Pi_{k+1}^P(A)}= \mathsf{coNP^{\Sigma_{k}^P(A)}}
$$
It is well-known that $\mathsf{PH}\subseteq \mathsf{PSPACE}$. If $\mathsf{P}\neq \mathsf{NP}$, then each class in the hierarchy contains harder problems than the previous ones. For undecidable problems, there exists an analogous arithmetic hierarchy. The base case is $\mathsf{\Sigma_1^0}$, which is the class of the halting problem. The other classes in the arithmetic hierarchy $\mathsf{\Pi_0^1}, \mathsf{\Sigma_2^0}, ...$ alternate in a similar way. The detailed definitions of polynomial and arithmetic hierarchy can be found in standard textbooks on recursion theory and computational complexity such as~\cite{arora09}. 

\subsection{Type 2 Computable Functions}

Given a finite alphabet $\Sigma$, let $\Sigma^*$ denote the set of finite strings and $\Sigma^{\omega}$ the set of infinite strings generated by $\Sigma$. For any $s_1, s_2\in \Sigma^*$, $\langle s_1,s_2\rangle$ denotes their concatenation. An integer $i\in \mathbb{Z}$ used as a string over $\{0,1\}$ has its conventional binary representation. The set of {\em dyadic rational numbers} is $\mathbb{D} = \{m/2^n: m\in \mathbb{Z}, n\in \mathbb{N}\}$. 

\paragraph{Computations over Infinite Strings} Standard computability theory studies operations over finite strings and does not consider real-valued functions. Real numbers can be encoded as infinite strings, and a theory of computability of real functions can be developed with oracle machines that perform operations using function-oracles encoding real numbers. This is the approach developed in Computable Analysis, a.k.a., Type 2 Computability. We will briefly review definitions and results of importance to us. Details can be found in the standard references~\cite{CAbook,Kobook,vasco}.

\begin{definition}[Names]
A name of $a\in \mathbb{R}$ is defined as a function $\mathcal{\gamma}_a: \mathbb{N}\rightarrow \mathbb{D}$ satisfying 
$$\forall i\in \mathbb{N}, |\gamma_a(i) - a|<2^{-i}.$$
For $\vec a\in \mathbb{R}^n$, $\gamma_{\vec a}(i) = \langle \gamma_{a_1}(i), ..., \gamma_{a_n}(i)\rangle$.  
\end{definition}
Thus the name of a real number is a sequence of dyadic rational numbers converging to it. For $\vec a\in \mathbb{R}^n$, we write $\Gamma(\vec a) = \{\gamma: \gamma\mbox{ is a name of }\vec a\}$. Noting that names are discrete functions, we can define
\begin{definition}[Computable Reals]
A real number $a\in \mathbb{R}$ is computable if it has a name $\gamma_{a}$ that is a computable function. 
\end{definition}

A real function $f$ is computable if there is a function-oracle Turing machine that can take any argument $x$ of $f$ as a function oracle, and output the value of $f(x)$ up to an arbitrary precision. 

\begin{definition}[Computable Functions]
We say a real function $f:\subseteq\mathbb{R}^n\rightarrow \mathbb{R}$ is Type 2 computable if there exists a function-oracle Turing machine $\mathcal{M}_f$, outputting dyadic rationals, such that for any $\vec x \in \dom(f)$, any name $\gamma_{\vec x}$ for $\vec x$, and any $i \in \mathbb{N}$, the output of $M_f^{\gamma_{\vec x}(i)}$ satisfies that  
$$|M_f^{\gamma_{\vec x}}(i) - f(\vec x)|<2^{-i},$$
which means that it approximates $f(\vec x)$ up to $2^{-i}$. 
\end{definition}

In the definition, $i$ specifies the desired error bound on the output of $M_f$ with respect to $f(\vec x)$. For any $\vec x\in \dom(f)$, $M_f$ has access to an oracle encoding the name $\gamma_{\vec x}$ of $\vec x$, and output a $2^{-i}$-approximation of $f(\vec x)$. In other words, the sequence 
$$M_f^{\gamma_{\vec x}}(1), M_f^{\gamma_{\vec x}}(2), ... $$
is a name of $f(\vec x)$. Intuitively, $f$ is computable if an arbitrarily good approximation of $f(\vec x)$ can be obtained using any good enough approximation to any $\vec x\in\dom(f)$.
\begin{proposition}[~\cite{CAbook}]
The following real functions are computable: addition, multiplication, absolute value, $\min$, $\max$, $\exp$, $\sin$ and solutions of Lipschitz-continuous ordinary differential equations. Compositions of computable functions are computable.  
\end{proposition}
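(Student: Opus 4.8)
The plan is to verify Type~2 computability for each listed function directly against the definition of the oracle machine $M_f$, by exhibiting, for a target precision $i$, a finite set of oracle queries to the input name(s) together with a dyadic computation whose output is provably within $2^{-i}$ of the true value. The unifying tool is a computable modulus of continuity: if on a bounded region $f$ satisfies $|f(\vec x) - f(\vec y)| \le L\cdot\|\vec x - \vec y\|$ for a (computable) constant $L$, then querying each name at precision $\mu(i) = i + O(\log L)$ suffices, and the value of $f$ at the resulting dyadic point can be approximated to within the remaining budget.

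First, the arithmetic and elementary cases. For addition, querying $\gamma_a(i+1)$ and $\gamma_b(i+1)$ and outputting their (dyadic) sum gives error $< 2^{-(i+1)} + 2^{-(i+1)} = 2^{-i}$ by the triangle inequality; absolute value, $\min$, and $\max$ are handled identically since they are $1$-Lipschitz. Likewise $\sin$ is globally $1$-Lipschitz, so the same scheme applies once we can approximate $\sin$ at a dyadic point, which we do by truncating its alternating Taylor series with the standard effective remainder bound. Multiplication and $\exp$ require one extra step because they are only \emph{locally} Lipschitz: I would first query the names at precision $1$ to obtain a dyadic bound $N$ on $|a|,|b|$ (resp.\ on $|a|$), from which a local Lipschitz constant is computed, and then query at the precision dictated by that constant; for $\exp$ the value at a dyadic point is again obtained from its Taylor series with a computable remainder bound.

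For solutions of Lipschitz ODEs $\dot{\vec y} = f(t,\vec y)$, $\vec y(0) = \vec y_0$, the plan is to make the Picard--Lindel\"of argument effective. Existence and uniqueness follow from the contraction mapping principle, and the geometric convergence rate of the Picard iterates is governed by the (computable) Lipschitz constant, which yields a computable bound on how many iterates are needed to reach precision $2^{-i}$; each iterate is an integral of a computable function, and integration of a computable function over a computable interval is itself computable via its modulus of continuity. Equivalently, one may bound the global error of Euler's method in terms of the Lipschitz constant and the modulus of continuity of $f$, again giving a computable step count. Finally, closure under composition is the cleanest part and proceeds by a direct simulation argument: to compute $(f\circ g)(\vec x)$ to precision $2^{-i}$, I would run $M_f$ using an oracle for a name of $g(\vec x)$ generated on demand---whenever $M_f$ requests the $j$-th approximation of its input, we answer by running $M_g$ at precision $j$ on the input, forwarding $M_g$'s own oracle queries to the given name of $\vec x$.

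The step I expect to be the main obstacle is the ODE case: establishing a \emph{computable} error bound---one that is itself a computable function of the target precision $i$, the Lipschitz constant, and the modulus of continuity of $f$---is what makes the argument genuinely quantitative, whereas the classical existence theorem is only asymptotic. The elementary functions are routine once one notices that multiplication and $\exp$ demand a preliminary magnitude bound, and composition is purely a matter of wiring oracle queries together.
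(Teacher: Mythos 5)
Your proposal is correct. The paper offers no proof of this proposition---it is quoted from the standard reference \cite{CAbook}---and your sketch reconstructs exactly the standard arguments found there: Lipschitz/modulus-of-continuity bookkeeping for the elementary functions (with the preliminary magnitude query for the merely locally Lipschitz cases of multiplication and $\exp$), effective Picard--Lindel\"of iteration with a computable error bound for the ODE case, and the oracle-wiring simulation for closure under composition. The only caveat worth recording is that the ODE claim is non-uniform: your argument presupposes that a Lipschitz bound for the right-hand side is given (it is not in general computable from a name of $f$ alone), which is the intended reading of the proposition and of its use elsewhere in the paper.
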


A key property of the above notion of computability is that computable functions over reals must be continuous. In fact, over any compact set $D\subseteq \mathbb{R}^n$, computable functions are uniform continuous with a {\em computable modulus of continuity}. Intuitively, if a function has a computable uniform modulus of continuity, then fixing any desired error bound $2^{-i}$ on the output, we can compute a {\em global} precision $2^{-m_f(i)}$ on the inputs from $D$ such that using any $2^{-m_f(i)}$-approximation of any $\vec x\in D$, $f(\vec x)$ can be computed within the error bound. 

Complexity of real functions is usually defined over compact domains. Without loss of generality, we consider functions over $[0,1]$. Intuitively, a real function $f:[0,1]\rightarrow\mathbb{R}$ is (uniformly) $\mathsf{P}$-computable ($\mathsf{PSPACE}$-computable), if it is computable by an oracle Turing machine $M_{f}$ that halts in polynomial-time (polynomial-space) for every $i\in \mathbb{N}$ and every $\vec x\in \dom(f)$. The formal definition is as follows:
\begin{definition}[\cite{Kobook}]
A real function $f: [0,1]^n\rightarrow \mathbb{R}$ is in $\mathsf{P_{C[0,1]}}$ (resp. $\mathsf{PSPACE_{C[0,1]}}$) iff there exists a representation $(m_f, \theta_f)$ of $f$ such that
\begin{itemize}
\item $m_f$ is a polynomial function, and 
\item for any $d\in (\mathbb{D}\cap [0,1])^n$, $e\in \mathbb{D}$, and $i\in \mathbb{N}$, $\theta_f(d,i)$ is computable in time (resp. space) $O((\mathit{len}(d)+i)^k)$ for some constant $k$.
\end{itemize}
\end{definition}
\begin{proposition}
The following real functions all reside in Type 2 complexity class $\mathsf{P_{C[0,1]}}$: absolute value, polynomials, binary $\max$ and $\min$, $\exp$, $\sin$, and their bounded compositions. \end{proposition}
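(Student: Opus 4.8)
The plan is to verify, for each listed function, the two conditions in the definition of $\mathsf{P_{C[0,1]}}$ -- a polynomial modulus $m_f$, together with a dyadic approximation map $\theta_f(d,i)$ computable in time polynomial in $\mathit{len}(d)+i$ -- and then to establish closure under bounded composition.

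First I would supply the moduli. Each base function is Lipschitz (or has a bounded derivative) on the relevant compact domain: absolute value, binary $\max$, and $\min$ are $1$-Lipschitz; a polynomial of fixed degree has, on $[0,1]^n$, a Lipschitz constant bounded by a fixed multiple of the sum of its coefficients; and $\exp,\sin$ have derivatives bounded by $e$ and $1$ on $[0,1]$. In each case a Lipschitz bound $L$ yields the \emph{linear} modulus $m_f(i) = i + \lceil \log_2 L\rceil$, which is certainly polynomial.

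Next I would exhibit the polynomial-time dyadic approximations. For the algebraic functions the value on a dyadic input is itself dyadic and exactly computable: $|d|$, $\max(d,e)$, $\min(d,e)$ reduce to a sign test, and a fixed-degree polynomial evaluated by Horner's rule performs only polynomially many additions and multiplications, each increasing bit-length by only a polynomial amount, so $\theta_f$ runs in time polynomial in $\mathit{len}(d)$. For $\exp$ and $\sin$ I would use truncated Taylor series: on $[0,1]$ the remainder after $N$ terms is at most $2/(N+1)!$, and since $N!\ge 2^{N}$ for $N\ge 4$, taking $N=O(i)$ terms forces the truncation error below $2^{-(i+1)}$; each rational coefficient $1/k!$ is then rounded to a dyadic of precision $2^{-(i+\lceil\log_2 N\rceil+1)}$, so the accumulated rounding error stays below $2^{-(i+1)}$ while all intermediate bit-lengths remain polynomial in $i$.

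Finally, the key step is closure under bounded composition. Given $f$ and $g$ in the class with representations $(m_f,\theta_f)$ and $(m_g,\theta_g)$, to approximate $(f\circ g)(\vec x)$ within $2^{-i}$ from a dyadic $d$ approximating $\vec x$, I would set $j=m_f(i+1)+1$, compute $d'=\theta_g(d,j)$ (a dyadic within $2^{-j}$ of $g(\vec x)$ once $d$ approximates $\vec x$ to precision $2^{-m_g(j)}$), and return $\theta_f(d',i+1)$; correctness follows because $|d'-g(\vec x)|<2^{-m_f(i+1)}$ forces $|f(d')-f(g(\vec x))|<2^{-(i+1)}$ by the modulus $m_f$, and $\theta_f$ contributes at most another $2^{-(i+1)}$. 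The composed modulus $m_{f\circ g}(i)=m_g(m_f(i+1)+1)$ is a composition of polynomials, hence polynomial. The point requiring genuine care -- and the main obstacle -- is the complexity bookkeeping: I must check that the length of the intermediate dyadic $d'$ stays polynomially bounded so that feeding it into $\theta_f$ causes no blow-up, and that the inner precision demands $m_f(i+1)$ propagated through a tower of compositions remain polynomial in $i$. Because each modulus is polynomial and the composition depth is a fixed constant, the nested substitution of polynomials is itself polynomial, and the boundedness hypothesis guarantees every intermediate value lands in a compact domain on which the inner approximations are valid.
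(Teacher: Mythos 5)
Your argument is correct: the paper states this proposition without proof, as a standard fact imported from the computable-analysis literature (Ko's book, the paper's reference for the definition of $\mathsf{P_{C[0,1]}}$), and your verification is essentially the textbook one --- Lipschitz bounds giving linear moduli, exact dyadic evaluation for the algebraic functions, truncated Taylor series with $O(i)$ terms for $\exp$ and $\sin$, and the modulus-composition argument for closure under bounded composition. The only blemishes are constant-level (e.g.\ the rounding budget $(N+1)\cdot 2^{-(i+\lceil\log_2 N\rceil+1)}$ slightly exceeds $2^{-(i+1)}$; use $\lceil\log_2(N+1)\rceil$), which do not affect the substance.
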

It is shown that solutions of Lipschitz-continuous differential equations are computable in $\mathsf{PSPACE_{C[0,1]}}$. In fact, it is shown to be $\mathsf{PSPACE}$-complete in the following sense. 
\begin{proposition}[\cite{Kawamura09}]
Let $g:[0,1]\times \mathbb{R}\rightarrow \mathbb{R}$ be polynomial-time computable and consider the initial value problem $\frac{df(t)}{dt} = g(t, f(t))$ for $f(0)=0$ and $t\in [0,1].$ Then computing the solution $f:[0,1]\rightarrow \mathbb{R}$ is in $\mathsf{PSPACE}$. Moreover, there exists $g$ such that computing f is $\mathsf{PSPACE}$-complete. 
\end{proposition}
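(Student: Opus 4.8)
The plan is to establish two directions separately: an upper bound showing that the solution $f$ is $\mathsf{PSPACE}$-computable, and a matching lower bound exhibiting a single polynomial-time computable (Lipschitz) $g$ for which approximating $f$ is $\mathsf{PSPACE}$-hard. The Lipschitz regularity tacitly supplied by the preceding proposition is what makes both directions go through, so I would use it throughout.

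For the upper bound, I would approximate $f$ by a grid-based integration scheme such as Euler's method. To obtain $f(t)$ within error $2^{-m}$ at a dyadic point $t$, discretize $[0,t]$ into $N = 2^{\mathrm{poly}(m)}$ subintervals of width $h = t/N$ and iterate the update $y_{k+1} = y_k + h\,g(t_k,y_k)$, rounding each $y_k$ to $\mathrm{poly}(m)$ bits. Using the Lipschitz property of $g$, the accumulated local and rounding errors amplify by a discrete Grönwall factor of at most $(1+Lh)^N \le e^{LT}$ with $T = t \le 1$, so choosing the precision and $N$ polynomially in $m$ keeps the global error below $2^{-m}$. The crucial observation is that although the number of iterations is exponential in $m$, each iteration merely calls $g$ (a polynomial-time, hence polynomial-space, subroutine) on arguments of polynomial length, while the loop counter together with the current state $y_k$ fits in polynomial space. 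Reusing the same workspace across iterations yields a $\mathsf{PSPACE}$ procedure; this is precisely the sense in which the earlier remark places ODE solutions in $\mathsf{PSPACE_{C[0,1]}}$.

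For the lower bound, I would fix a $\mathsf{PSPACE}$-complete language $L$ decided by a Turing machine $M$ running in space $s = \mathrm{poly}(n)$, and encode its computation into the trajectory of an ODE. The idea is to represent a configuration of $M$ (of length $s$) as a real number in $[0,1]$ via its digit expansion, and design $g$ so that the time-one map of the flow $\dot f = g(t,f)$ implements one transition step of $M$. One then rescales time so that the $2^{\mathrm{poly}(n)}$ steps of the space-bounded run are compressed into $t \in [0,1]$, the $k$-th step being simulated on a subinterval of width $2^{-\mathrm{poly}(n)}$. Since the transition function of $M$ is polynomial-time computable and the continuous interpolation turning the discrete update into a flow can be assembled from polynomial-time computable, Lipschitz primitives (smoothed piecewise-linear ramps), the resulting $g$ is polynomial-time computable and globally Lipschitz. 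Reading off a designated digit of $f(1)$ to precision $2^{-\mathrm{poly}(n)}$ then reveals whether $M$ accepts, giving the reduction from $L$.

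The hard part will be the hardness construction, not the upper bound. The difficulty is to simulate an exponentially long, discrete, space-bounded computation by a \emph{continuous} vector field that is simultaneously (i) polynomial-time computable, (ii) globally Lipschitz with a polynomially controlled constant, and (iii) faithful, meaning that interpolation and rounding artifacts do not corrupt the encoded configuration as errors accumulate over exponentially many simulated steps. Achieving all three at once is exactly the technical content of~\cite{Kawamura09}: the vector field must repeatedly stabilize or "refresh" the encoded configuration between steps so that the Lipschitz flow does not smear the discrete digit information, and it is at this point that the polynomial bound on the Lipschitz constant and the polynomial-time computability of $g$ must be reconciled with the exponential length of the simulation.
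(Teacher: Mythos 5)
The paper offers no proof of this proposition: it is imported verbatim from \cite{Kawamura09}, so there is no in-paper argument to compare yours against. Your upper-bound direction is correct and is the standard one: Euler iteration with step $2^{-O(m)}$, working precision $\mathrm{poly}(m)$, a discrete Gr\"onwall bound $(1+Lh)^N\le e^{L}$ controlling both truncation and rounding error, and the observation that exponentially many iterations cost nothing in space because the loop state (counter plus current $y_k$, both of polynomial length, with $|y_k|$ a priori bounded via Gr\"onwall) is reused and each call to $g$ runs in polynomial time, hence polynomial space. You are also right that the Lipschitz hypothesis, which the proposition's statement omits but the surrounding text supplies, is essential: without it uniqueness fails and the solution's complexity is not bounded by $\mathsf{PSPACE}$.

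The genuine gap is in the hardness direction, and it is larger than your closing caveat suggests, because the specific mechanism you propose cannot work as stated. For a scalar Lipschitz ODE the time-$t$ map on states is a strictly monotone bi-Lipschitz bijection (trajectories cannot cross, and Gr\"onwall bounds both expansion and contraction by $e^{\pm Lh}$ per step). Consequently (i) it cannot implement a transition function on digit-encoded configurations, which is neither monotone nor injective as a map on the encodings; and (ii) the ``refresh/stabilize'' step you invoke to prevent smearing is impossible inside the dynamics, since a flow with bounded Lipschitz constant cannot collapse a neighborhood of erroneous states onto the correct encoding --- any error correction must live in the decoding, not in $g$. Moreover, if the increment delivered over a time window of width $2^{-\mathrm{poly}(n)}$ is to depend on $\mathrm{poly}(n)$ many bits of the current value $f(s)$, then valid encodings are separated by gaps of size $2^{-\mathrm{poly}(n)}$ while the required values of $g$ change by constants across those gaps, forcing a Lipschitz constant of order $2^{\mathrm{poly}(n)}$ unless the encoding is arranged far more carefully; but the theorem requires a single fixed $g$ with a fixed Lipschitz constant, hard with respect to the precision parameter alone. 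Reconciling these constraints is precisely the content of Kawamura's construction (via a reduction through iterated composition of tame functions), and your sketch replaces it with a scheme that founders on exactly these obstructions. So the lower bound should be regarded as cited, not proved, in your write-up as well.
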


\section{$\lrf$-Formulas and $\delta$-Decidability}

\subsection{$\lrf$-Formulas}
We will use a logical language over the real numbers that allows arbitrary {\em computable real functions}~\cite{CAbook}. We write $\lrf$ to represent this language. Intuitively, a real function is computable if it can be numerically simulated up to an arbitrary precision. For the purpose of this paper, it suffices to know that almost all the functions that are needed in describing hybrid systems are Type 2 computable, such as polynomials, exponentiation, logarithm, trigonometric functions, and solution functions of Lipschitz-continuous ordinary differential equations.

More formally, $\lrf = \langle \mathcal{F}, > \rangle$ represents the first-order signature over the reals with the set $\mathcal{F}$ of computable real functions, which contains all the functions mentioned above. Note that constants are included as 0-ary functions. $\lrf$-formulas are evaluated in the standard way over the structure $\mathbb{R}_{\mathcal{F}}= \langle \mathbb{R}, \mathcal{F}^{\mathbb{R}}, >^{\mathbb{R}}\rangle$. It is not hard to see that  we can put any $\lrf$-formula in a normal form, such that its atomic formulas are of the form $t(x_1,...,x_n)>0$ or $t(x_1,...,x_n)\geq 0$, with $t(x_1,...,x_n)$ composed of functions in $\mathcal{F}$. To avoid extra preprocessing of formulas, we can explicitly define $\mathcal{L}_{\mathcal{F}}$-formulas as follows.
\begin{definition}[$\lrf$-Formulas]
Let $\mathcal{F}$ be a collection of computable real functions. We define:
\begin{align*}
t& := x \; | \; f(t), \mbox{ where }f\in \mathcal{F} \mbox{ (constants are 0-ary functions)}\\
\varphi& := t> 0 \; | \; t\geq 0 \; | \; \varphi\wedge\varphi
\; | \; \varphi\vee\varphi \; | \; \exists x_i\varphi \; |\; \forall x_i\varphi.
\end{align*}
In this setting $\neg\varphi$ is regarded as an inductively defined operation
which replaces atomic formulas $t>0$ with $-t\geq 0$, atomic formulas $t\geq 0$
with $-t>0$, switches $\wedge$ and $\vee$, and switches $\forall$ and $\exists$.
\end{definition}
\begin{definition}[Bounded $\lrf$-Sentences]
We define the bounded quantifiers $\exists^{[u,v]}$ and $\forall^{[u,v]}$ as
\begin{eqnarray*}
\exists^{[u,v]}x.\varphi &=_{df}&\exists x. ( u \leq x \land x \leq v \wedge
\varphi)\\
\forall^{[u,v]}x.\varphi &=_{df}& \forall x. ( (u \leq x \land x \leq v)
\rightarrow \varphi)
\end{eqnarray*}
where $u$ and $v$ denote $\lrf$ terms, whose variables only
contain free variables in $\varphi$ excluding $x$. A {\em bounded $\lrf$-sentence} is
$Q_1^{[u_1,v_1]}x_1\cdots Q_n^{[u_n,v_n]}x_n\;\psi(x_1,...,x_n),$ where $Q_i^{[u_i,v_i]}$ are bounded quantifiers, and $\psi(x_1,...,x_n)$ is
quantifier-free.
\end{definition}
\subsection{$\delta$-Perturbations and $\delta$-Decidability}
\begin{definition}[$\delta$-Variants]\label{variants}
Let $\delta\in \mathbb{Q}^+\cup\{0\}$, and $\varphi$ an
$\lrf$-formula
$$\varphi: \ Q_1^{I_1}x_1\cdots Q_n^{I_n}x_n\;\psi[t_i(\vec x, \vec y)>0;
t_j(\vec x, \vec
y)\geq 0],$$ where $i\in\{1,...k\}$ and $j\in\{k+1,...,m\}$. The {\em
$\delta$-weakening} $\varphi^{\delta}$ of $\varphi$ is
defined as the result of replacing each atom $t_i > 0$ by $t_i >
-\delta$ and $t_j \geq 0$ by $t_j \geq -\delta$:
$$\varphi^{\delta}:\ Q_1^{I_1}x_1\cdots Q_n^{I_n}x_n\;\psi[t_i(\vec x, \vec
y)>-\delta; t_j(\vec x,
\vec y)\geq -\delta].$$
\end{definition}
It is easy to see that the perturbed formula is implied by the original formula. 
\begin{proposition}[(see \cite{DBLP:conf/lics/GaoAC12})]~\label{overap}
For any $\varphi$, we have $\varphi\rightarrow\varphi^{\delta}$.
\end{proposition}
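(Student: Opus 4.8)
The plan is to prove the implication by a straightforward structural induction on $\varphi$, the whole force of which rests on a single feature of the setup: by the grammar in the definition of $\lrf$-formulas, no formula contains an explicit negation. The connective $\neg$ is only a \emph{defined} operation that pushes negation down to the atoms, so every $\lrf$-formula is already in positive form, built from its atoms using exclusively the monotone operations $\wedge$, $\vee$, $\exists$, and $\forall$. This positivity is precisely what makes the $\delta$-weakening monotone, and I would flag it at the very start as the one hypothesis the argument cannot do without.

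First I would dispatch the base case at the level of atoms. Fix any assignment $v$ of real values to the free variables of $\varphi$, and let $a$ be an atom of the matrix $\psi$. If $a$ is $t_i(\vec x,\vec y)>0$ and $v$ satisfies $a$, then the value of $t_i$ under $v$ is strictly positive; since $\delta\geq 0$ we have $-\delta\leq 0$, so that value also exceeds $-\delta$ and $v$ satisfies $t_i>-\delta$, i.e. $a^{\delta}$. The case of an atom $t_j\geq 0$ is identical and yields $t_j\geq -\delta$. Hence every atom implies its $\delta$-weakened counterpart, uniformly in the assignment.

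Next I would propagate this through the connectives and quantifiers by induction, using monotonicity with respect to implication at each step: if $\varphi_1\rightarrow\varphi_1^{\delta}$ and $\varphi_2\rightarrow\varphi_2^{\delta}$ hold under every assignment, then $\varphi_1\wedge\varphi_2\rightarrow\varphi_1^{\delta}\wedge\varphi_2^{\delta}$ and $\varphi_1\vee\varphi_2\rightarrow\varphi_1^{\delta}\vee\varphi_2^{\delta}$; and from $\varphi_1\rightarrow\varphi_1^{\delta}$ holding for all values of $x_i$ one obtains $\exists x_i\,\varphi_1\rightarrow\exists x_i\,\varphi_1^{\delta}$ and $\forall x_i\,\varphi_1\rightarrow\forall x_i\,\varphi_1^{\delta}$. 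Since the map $\varphi\mapsto\varphi^{\delta}$ leaves the quantifier prefix $Q_1^{I_1}x_1\cdots Q_n^{I_n}x_n$ and the Boolean skeleton of $\psi$ untouched and only replaces atoms, these steps reassemble exactly $\varphi^{\delta}$, giving $\varphi\rightarrow\varphi^{\delta}$.

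I do not expect a genuine obstacle: the content is entirely the monotonicity bookkeeping above. The only place the argument could break is the appeal to positive form — were negations permitted to stand in front of subformulas, weakening an atom could flip the truth value of the surrounding formula and the implication would fail. I would therefore state explicitly that the $\lrf$-grammar forbids such negations, so that each atom occurs positively and the induction closes. As a minor remark, if one regards the bound atoms introduced by the bounded quantifiers $Q_i^{[u_i,v_i]}$ as part of $\psi$, the same monotonicity applies to them verbatim, so the conclusion is unaffected by whether or not the bounds are themselves weakened.
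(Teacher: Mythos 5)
Your proof is correct and is exactly the intended argument: the paper itself offers no proof here (it simply remarks that the implication is ``easy to see'' and defers to the cited reference), and the standard justification is precisely your structural induction --- atoms weaken since $t>0$ implies $t>-\delta$ for $\delta\geq 0$, and the implication propagates because every $\lrf$-formula is in negation normal form, so all contexts are monotone. Your explicit flagging of the positivity of the grammar as the load-bearing hypothesis is a good touch and matches the reason the definition treats $\neg$ as a derived, atom-level operation.
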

In~\cite{DBLP:conf/lics/GaoAC12,DBLP:conf/cade/GaoAC12}, we have proved that the following $\delta$-decision problem is decidable, which is the basis of our framework.
\begin{theorem}[$\delta$-Decidability~\cite{DBLP:conf/lics/GaoAC12}]\label{delta-decide} Let $\delta\in\mathbb{Q}^+$ be
arbitrary. There is an algorithm which, given any bounded $\lrf$-sentence $\varphi$,
correctly returns one of the following two answers:
\begin{itemize}
\item $\delta$-$\mathsf{True}$: $\varphi^{\delta}$ is true.
\item $\mathsf{False}$: $\varphi$ is false.
\end{itemize}
When the two cases overlap, either answer is correct.
\end{theorem}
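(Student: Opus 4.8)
The plan is to reduce the decision over the continuum to a finite computation, exploiting two facts guaranteed by the hypotheses: every quantifier in a bounded $\lrf$-sentence ranges over a compact box, and every term $t$ built from $\mathcal{F}$ is a composition of Type 2 computable functions, hence computable and therefore \emph{uniformly continuous with a computable modulus of continuity} on that box. The single parameter $\delta$ serves as a fixed error budget that absorbs both the error from discretizing the quantifiers onto a finite grid and the error from evaluating the terms only approximately. Throughout I use sup-norm balls, so that joint neighborhoods in several variables behave well under the inductive step.

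First I would fix the error bookkeeping at the level of atoms. Writing $\varphi$ in normal form with atoms $t_i>0$ and $t_j\ge 0$, I reserve a numerical precision $\mu$ and a grid-oscillation bound $\eta$ with $2(\mu+\eta)\le\delta$ (for instance $\mu=\eta=\delta/8$). Using the computable modulus of continuity of each term over the relevant box $B=\prod_i[u_i,v_i]$, I obtain a grid spacing $\varepsilon$ such that $\|\vec x-\vec x'\|_\infty<\varepsilon$ implies $|t(\vec x)-t(\vec x')|<\eta$ for every term $t$. At a dyadic grid point $\vec g$ I compute an approximation $\tilde t$ with $|\tilde t-t(\vec g)|<\mu$ via the Type 2 machine $M_t$, and declare the atom \emph{true} iff $\tilde t>-\delta/2$. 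The calibration then yields two one-sided \emph{neighborhood certificates}: if the atom is declared true at $\vec g$ then its $\delta$-weakening holds on the entire $\varepsilon$-ball around $\vec g$, and if it is declared false then the \emph{unweakened} atom fails on that whole ball.

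Next I would propagate these certificates through the formula. Since negations are pushed onto the atoms, the quantifier-free matrix $\psi$ is a monotone $\{\wedge,\vee\}$-combination of atoms, and $\delta$-weakening commutes with this combination; a componentwise-monotonicity argument upgrades the atomic certificates to $\psi$, so that ``declared true at $\vec g$'' implies $\psi^{\delta}$ holds on the $\varepsilon$-ball and ``declared false'' implies $\psi$ fails on it. I then induct on the quantifier prefix from the inside out. Writing $\Phi_k$ for the subformula with free variables $x_1,\dots,x_k$ obtained by deleting the outer $k$ quantifiers (so $\Phi_n=\psi$ and $\Phi_0=\varphi$), I replace each $\exists^{[u,v]}x_k$ by a finite disjunction and each $\forall^{[u,v]}x_k$ by a finite conjunction over the grid points of $[u_k,v_k]$ and evaluate these finite Booleans. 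The covering property of the grid (every continuum point lies in some grid ball), together with the sup-norm choice, lets the invariant pass through each quantifier: a true existential uses a grid point as a genuine witness for $\Phi_k^{\delta}$, while a false existential uses ball-covering to exclude \emph{every} continuum witness of the unweakened $\Phi_k$, and the universal case is dual. Running this to the outermost quantifier yields a single Boolean for $\varphi$: I output $\delta$-$\mathsf{True}$ if it is true and $\mathsf{False}$ if it is false. The neighborhood invariant gives exactly the required soundness, and by Proposition~\ref{overap} the two guarantees can overlap only when $\varphi$ is false but $\varphi^{\delta}$ is true, where either answer is permitted.

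The main obstacle I anticipate is the uniform error accounting across nested, \emph{variable-dependent} quantifier bounds. Because each box $[u_k,v_k]$ depends on the outer assignment $x_1,\dots,x_{k-1}$, I must establish a single modulus of continuity valid over the whole nested region and ensure one grid spacing $\varepsilon$ works simultaneously at every level, so that the accumulated discretization-plus-numerical error stays within the one global $\delta$. A related subtlety is that the bounded quantifiers expand to inequalities $u_k\le x_k\le v_k$ whose bound-atoms are themselves $\delta$-weakened in $\varphi^{\delta}$, perturbing each integration domain by $O(\delta)$; I would show this domain shift is harmless, being dominated by the same margin already reserved in the atom-level threshold. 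Once these two points are handled, the induction closes and the algorithm halts in finitely many steps, since every grid is finite and every term approximation terminates.
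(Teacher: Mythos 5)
The paper does not prove this theorem itself---it imports it verbatim from the cited reference \cite{DBLP:conf/lics/GaoAC12}---and your reconstruction follows essentially the same route as that source: compactness plus a computable modulus of continuity for the terms yields a finite grid on which approximate evaluation with a $\delta/2$ threshold gives the two one-sided certificates, which are then propagated through the monotone matrix and the quantifier prefix. Your error bookkeeping and your flagged subtleties (variable-dependent bounds, the $O(\delta)$ shift of the quantifier intervals under weakening) are handled correctly and in the right direction, so the argument is sound.
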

\begin{theorem}[Complexity~\cite{DBLP:conf/lics/GaoAC12}]\label{compmain}
Let $S$ be a class of $\lrf$-sentences, such that for any $\varphi$ in $S$, the terms in $\varphi$ are in Type 2 complexity class $\mathsf{C}$. Then, for any $\delta\in \mathbb{Q}^+$, the $\delta$-decision problem for bounded $\Sigma_n$-sentences in $S$ is in $\mathsf{(\Sigma_n^P)^C}$.
\end{theorem}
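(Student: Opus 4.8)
The plan is to realize the $\delta$-decision procedure as an alternating oracle computation whose quantifier alternations mirror the $\Sigma_n$ prefix of the input sentence, and whose only nonelementary work---evaluating the terms $t_i$---is delegated to an oracle for the class $\mathsf{C}$. Write the input as $\varphi = Q_1^{[u_1,v_1]}x_1\cdots Q_n^{[u_n,v_n]}x_n\,\psi(\vec x)$ with $Q_1=\exists$ (the $\Sigma_n$ case). The central idea is that, because all quantifiers are bounded and every term is Type 2 computable, each variable may be restricted to a finite grid of dyadic rationals of polynomially bounded bit-length, after which the alternating quantification over the grid is exactly a $\Sigma_n^P$ computation relative to the term oracle.

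First I would fix the discretization. Since the $u_i,v_i$ are terms over the already-quantified variables, at each level the range of $x_i$ is a compact interval whose endpoints are computable from the oracle. Using the computable uniform modulus of continuity guaranteed for Type 2 computable functions over compact sets, I would choose a precision $2^{-p}$ such that perturbing any argument of any term by at most $2^{-p}$ changes its value by less than $\delta/4$. I then replace each bounded continuous quantifier by a quantifier ranging over the dyadic grid $\{u_i, u_i+2^{-p}, u_i+2\cdot 2^{-p}, \ldots\}\cap[u_i,v_i]$. Each grid point is named by a dyadic rational whose length is $O(p+\mathrm{len}(u_i,v_i))$; for the concrete classes of interest (e.g.\ $\mathsf{P_{C[0,1]}}$) the modulus is polynomial, so $p$ and hence every guessed string is polynomial in $\mathrm{len}(\varphi)+\log(1/\delta)$.

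The machine then proceeds in $n$ alternating phases: at an $\exists$-level it guesses the bits of a grid value $x_i$, at a $\forall$-level it universally branches over them, matching the prefix $Q_1\cdots Q_n$ and hence the $\Sigma_n^P$ alternation pattern. At the innermost level it must decide the quantifier-free matrix; here I would evaluate each term $t_i(\vec x)$ to precision $\delta/4$ by querying the oracle for $\mathsf{C}$ (this is exactly the function witnessing $t_i\in\mathsf{C}$), and accept according to the $\delta/2$-weakened atoms $t_i>-\delta/2$, $t_j\geq -\delta/2$. Because $n$ is fixed and the only super-polynomial resource is the oracle evaluation, the overall computation lies in $\mathsf{(\Sigma_n^P)^C}$.

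What remains, and where the real content lies, is the correctness of the reduction: I must show that accepting on the grid with $\delta/2$-weakened atoms is a valid $\delta$-decision, i.e.\ that a ``yes'' implies $\varphi^{\delta}$ is true and a ``no'' implies $\varphi$ is false. I would prove this by induction on the quantifier prefix, maintaining the invariant that the grid semantics agrees with the continuous semantics of the $\delta$-weakened (resp.\ original) formula up to a slack of $\delta/2$ per atom. The existential levels are immediate, since the grid is a subset of the interval; the subtle direction is the universal levels, where I must use that the grid is a $2^{-p}$-net together with the modulus-of-continuity bound to pass from ``all grid points satisfy the $\delta/2$-weakening'' to ``all real points satisfy the $\delta$-weakening,'' and symmetrically for failure. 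Controlling how these per-atom margins accumulate cleanly across the $n$ alternations, so that the total error stays within the single budget $\delta$, is the main obstacle; the numerical evaluation error $\delta/4$ and the discretization error $\delta/4$ are chosen precisely so that their sum fits inside the $\delta/2$ cushion at every level.
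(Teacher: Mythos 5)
The paper does not actually prove this theorem --- it is imported without proof from \cite{DBLP:conf/lics/GaoAC12} --- but your argument reconstructs essentially the proof given there: discretize each bounded quantifier to a dyadic grid whose mesh is fixed by the modulus of continuity of the terms and the error budget $\delta$, turn the quantifier prefix into an $n$-fold alternation over polynomial-length strings, and delegate term evaluation to the $\mathsf{C}$ oracle, with the net-plus-modulus argument supplying the $\delta$-decision correctness. This is the same approach and is correct (the only quibble being that at existential levels the \emph{failure} direction also needs the net argument, not just the universal levels, which your ``symmetrically for failure'' remark implicitly covers).
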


\section{Stability of Continuous Systems}

\subsection{$\lrf$-Representations}

Consider an $n$-dimensional autonomous ODE system
\begin{eqnarray}\label{ds}
\frac{\der x(t)}{\der t} = f(x(t))
\end{eqnarray}
where $f$ is Lipschitz-continuous and $x(0)\in \mathbb{R}^n$. We define the $\lrf$-representation of the system to be a logical formula that describes the all points on the trajectory of the dynamical system. 
\begin{definition}
We say the system (\ref{ds}) is $\lrf$-represented by an $\lrf$-formula $\flow(x_0, x_t, t)$, if for any $x(t)\in \mathbb{R}$, $x(t)$ is on the trajectory of the system iff the $\flow(x_0, x_t, t)$ is true. 
\end{definition}

From Picard-Lindel\"of iteration, we know that the $\lrf$-representation for continuous systems has an explicit form:

\begin{proposition}
The dynamical system in (\ref{ds}) has a trajectory that passes through $a\in \mathbb{R}$ iff the following $\lrf$-formula is true:
$$\flow(x_0, x_t, t)=_{df}\; (x_t = \int_0^t f(x(s))\der s + x_0)$$
\end{proposition}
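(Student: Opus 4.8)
The plan is to reduce the claim to the classical equivalence between an initial value problem and its associated Volterra integral equation, which is exactly the content of the Picard--Lindel\"of construction, and then to check that the integral term is a legitimate $\lrf$-term. First I would fix the initial point $x_0$ and invoke the hypothesis that $f$ is Lipschitz-continuous: by the Picard--Lindel\"of theorem this guarantees that the initial value problem $\dot x = f(x)$, $x(0)=x_0$, has a \emph{unique} solution on its interval of existence. Uniqueness is what makes the phrase ``the trajectory'' well-defined, so I would state it explicitly as the object $\phi(\,\cdot\,;x_0)$ whose graph is the trajectory; a point $x_t$ lies on the trajectory at time $t$ precisely when $x_t = \phi(t;x_0)$.

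The heart of the argument is the two directions of the IVP--integral-equation equivalence, both obtained from the fundamental theorem of calculus. For the forward direction, if $\phi$ solves the differential equation then integrating both sides from $0$ to $t$ gives $\phi(t)-\phi(0)=\int_0^t f(\phi(s))\der s$, and since $\phi(0)=x_0$ this rearranges to the integral identity defining $\flow$. For the converse, I would take any continuous $\phi$ satisfying the integral identity, observe that $s\mapsto f(\phi(s))$ is continuous (a composition of continuous maps), and apply the fundamental theorem of calculus to conclude that the right-hand side is differentiable with derivative $f(\phi(t))$ and takes value $x_0$ at $t=0$; hence $\phi$ solves the IVP. Combining the two directions with uniqueness yields $x_t = \phi(t;x_0)$ iff $x_t = x_0 + \int_0^t f(x(s))\der s$, which is exactly the statement that $x_t$ is on the trajectory iff $\flow(x_0,x_t,t)$ holds.

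The one point that needs care --- and the closest thing to an obstacle --- is that the expression $\int_0^t f(x(s))\der s$ only qualifies as an $\lrf$-term if it denotes a genuine computable real function of the free variables $x_0$ and $t$. The notation is mildly circular, since $x(s)$ inside the integrand is itself the solution determined by $x_0$; what is really being asserted is that the flow map $(x_0,t)\mapsto\phi(t;x_0)$ is expressible using functions in $\mathcal{F}$. This is supplied by the earlier proposition recording that solutions of Lipschitz-continuous ordinary differential equations are Type 2 computable, so the term is admissible and the equivalence proved above relates a point relation to a well-formed $\lrf$-formula. I would close by noting the scope caveat: the equivalence holds on the maximal interval of existence of the solution, so global (unbounded-time) statements presuppose that $\phi$ is defined for all $t\ge 0$, whereas for the bounded stability problems studied later it suffices that $t$ range over a compact interval on which $\phi$ exists.
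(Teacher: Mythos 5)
Your argument is correct and matches the paper's intent: the paper offers no written proof beyond the remark that the claim follows ``from Picard--Lindel\"of iteration,'' and your reduction to the standard IVP--Volterra integral equation equivalence (via the fundamental theorem of calculus plus uniqueness from the Lipschitz hypothesis) is exactly the argument being invoked. Your added observations --- that the term $\int_0^t f(x(s))\,\der s$ is only a legitimate $\lrf$-term because solutions of Lipschitz ODEs are Type 2 computable, and that the equivalence is restricted to the interval of existence --- are caveats the paper leaves implicit rather than a departure from its approach.
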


\begin{proposition}
A continuous system has a $\lrf$-representation, when $f$ is a Type 2 computable function. 
\end{proposition}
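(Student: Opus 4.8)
The plan is to verify that the formula $\flow(x_0,x_t,t)$ exhibited in the previous proposition is genuinely a sentence of the language $\lrf$; by the definition $\lrf=\langle\mathcal{F},>\rangle$, this amounts to showing that every function symbol occurring in it names an element of the class $\mathcal{F}$ of computable real functions. The only nontrivial symbol is the one encoding the right-hand side $\int_0^t f(x(s))\,\der s + x_0$, so the whole proposition reduces to arguing that this term denotes a computable function.

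First I would make the implicit integral equation explicit as a single function application. Since $f$ is Lipschitz-continuous, the initial value problem $\frac{\der x(t)}{\der t}=f(x(t))$ with $x(0)=x_0$ has a unique solution on the interval of interest by Picard--Lindel\"of, so we may define the solution operator $\Phi(x_0,t)$ returning the value of that trajectory at time $t$. By the characterization of the preceding proposition, a point $x_t$ lies on the trajectory through $x_0$ at time $t$ if and only if $x_t=\Phi(x_0,t)$, so it suffices to represent this equality in $\lrf$.

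The key step is to show $\Phi\in\mathcal{F}$. Here I would invoke the computability proposition quoted earlier (from~\cite{CAbook}): solutions of Lipschitz-continuous ordinary differential equations are Type 2 computable. Because $f$ is assumed both Lipschitz-continuous and Type 2 computable, that result yields that $\Phi$ is a Type 2 computable function, hence a legitimate member of $\mathcal{F}$ and a legal $\lrf$-term. Finally I would put the equality into the normal form required by the definition of $\lrf$-formulas, writing $x_t=\Phi(x_0,t)$ as the conjunction $x_t-\Phi(x_0,t)\ge 0 \ \wedge\ \Phi(x_0,t)-x_t\ge 0$; since subtraction is computable and $\mathcal{F}$ is closed under composition, this is a well-formed $\lrf$-formula, which is exactly the desired $\lrf$-representation.

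The step I expect to be the crux is establishing $\Phi\in\mathcal{F}$: the passage from a computable right-hand side $f$ to a computable solution operator is precisely the effective content of Picard iteration, and one must check that the hypotheses of the cited result are met — in particular that the Lipschitz constant and the time/space domain are handled so that the Picard iterates converge with a computable modulus of continuity. Once this is granted, the remaining steps are purely syntactic bookkeeping within the grammar of $\lrf$-formulas.
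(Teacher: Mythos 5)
Your argument is correct and follows essentially the route the paper intends: the paper states this proposition without an explicit proof, relying (as you do) on the cited fact that solutions of Lipschitz-continuous ODEs with Type 2 computable right-hand sides are themselves Type 2 computable, so that the solution term in $\flow(x_0,x_t,t)$ is a legitimate $\lrf$-term. Your additional bookkeeping (introducing the solution operator explicitly and rewriting the equality as a conjunction of two $\geq 0$ atoms to match the grammar) is a faithful elaboration of what the paper leaves implicit, not a different approach.
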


Since $f$ can be any numerically computable function, this definition covers almost all dynamical systems of interest. We can now speak of the dynamical system (\ref{ds}) and its $\lrf$-representation $\flow(x_0, x_t, t)$ interchangeably. 

The $\delta$-perturbation on a system is defined through $\delta$-perturbations on its $\lrf$-representation. 
\begin{definition}
The $\delta$-perturbation of a system that is $\lrf$-represented by $\flow(x_0, x_t, t)$ is the system represented by $\flow^{\delta}(x_0, x_t, t)$. 
\end{definition}
To be clear, the $\flow$ formula has an explicit definition:
\begin{proposition}
The $\delta$-perturbation of the system (\ref{ds}) is represented by 
$$\flow^{\delta} =_{df}\; |x_t- (\int_0^t f(x(s))\der s + x_0)|<\delta.$$
\end{proposition}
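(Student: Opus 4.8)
The plan is to unwind the definition of $\delta$-weakening (Definition~\ref{variants}) as applied to the explicit $\flow$ formula, whose sole atom is the equality $x_t = \int_0^t f(x(s))\,\der s + x_0$. Since $\delta$-weakening is defined only on atoms of the form $t>0$ and $t\geq 0$, the first step is to put $\flow$ into the normal form guaranteed for $\lrf$-formulas. Writing $g =_{df} x_t - \big(\int_0^t f(x(s))\,\der s + x_0\big)$, the equality $g = 0$ is equivalent to the conjunction $(g \geq 0)\wedge(-g \geq 0)$, so that $\flow$ appears in normal form with exactly two $\geq$-atoms and no strict atoms.

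Next I would apply the weakening operator verbatim. By Definition~\ref{variants}, each atom $t_j \geq 0$ is replaced by $t_j \geq -\delta$ while the propositional structure is left untouched. Hence $\flow^{\delta}$ is the conjunction $(g \geq -\delta)\wedge(-g \geq -\delta)$, which is precisely $-\delta \leq g \leq \delta$, i.e. $|g|\leq\delta$. Substituting back the definition of $g$ gives $\big|x_t - \big(\int_0^t f(x(s))\,\der s + x_0\big)\big| \leq \delta$, which is the claimed $\flow^{\delta}$ up to the boundary.

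The only subtlety — and the point I expect to require a remark rather than any real work — is the gap between the non-strict $\leq\delta$ that the definition delivers and the strict $<\delta$ written in the statement. This is immaterial within the $\delta$-decision framework: because $\delta\in\mathbb{Q}^+$ is arbitrary, the set cut out by $|g|\leq\delta$ and the set cut out by $|g|<\delta'$ for any $\delta'>\delta$ agree up to an arbitrarily thin boundary, and Theorem~\ref{delta-decide} is insensitive to exactly this boundary (when the two answers overlap, either is correct). I would therefore either state the proposition with $\leq$ or note explicitly that the two forms are interchangeable for $\delta$-decidability. No genuine analytic obstacle arises, since the argument is a purely syntactic unfolding of the weakening operator; that the perturbed representation is a sound over-approximation of the original system then follows immediately from Proposition~\ref{overap}, which gives $\flow\rightarrow\flow^{\delta}$.
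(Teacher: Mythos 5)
Your unfolding is correct and matches the paper's intent exactly: the paper states this proposition without any proof, treating it as an immediate syntactic application of Definition~\ref{variants} once the equality atom is split as $(g\geq 0)\wedge(-g\geq 0)$, which is precisely what you do. Your remark on the strict inequality is also well taken --- the weakening operator yields $|g|\leq\delta$, and the paper's own bouncing-ball example later writes $\flow_{q_u}^{\delta}$ with $\leq\delta$, so the $<\delta$ in this proposition is a minor inconsistency on the paper's side that, as you note, is immaterial for $\delta$-decidability.
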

Note that the $\delta$-perturbed system is always an overapproximation of the original system:
\begin{proposition}
We have $\llbracket\flow\rrbracket\subseteq \llbracket \flow^{\delta}\rrbracket$. 
\end{proposition}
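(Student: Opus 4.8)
The plan is to read the claimed set inclusion as the validity of the implication $\flow\rightarrow\flow^{\delta}$ and then to discharge it by the single observation that $\flow$ forces its residual to vanish exactly. Concretely, I would introduce the shorthand $r(x_0,x_t,t)=x_t-\left(\int_0^t f(x(s))\der s+x_0\right)$, so that by the preceding proposition $\flow$ is the atom $r=0$ and $\flow^{\delta}$ is $|r|<\delta$. The inclusion $\llbracket\flow\rrbracket\subseteq\llbracket\flow^{\delta}\rrbracket$ holds iff every assignment making $r=0$ also makes $|r|<\delta$, so it suffices to verify this pointwise.

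The direct step is then immediate: if a triple $(x_0,x_t,t)$ satisfies $\flow$, then $r(x_0,x_t,t)=0$, hence $|r(x_0,x_t,t)|=0$; and since $\delta\in\mathbb{Q}^+$ we have $0<\delta$, so $|r(x_0,x_t,t)|<\delta$ and the triple satisfies $\flow^{\delta}$. This establishes the inclusion without any further machinery.

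For completeness I would also point out that this is precisely the specialization of the general overapproximation fact of Proposition~\ref{overap} to the representing formula. To align the two presentations I would first put $\flow$ in the atomic normal form of the $\lrf$ grammar, rewriting the equality $r=0$ as the conjunction $(r\geq 0)\wedge(-r\geq 0)$; applying the $\delta$-weakening of Definition~\ref{variants}, which replaces each atom $t\geq 0$ by $t\geq-\delta$, yields $(r\geq-\delta)\wedge(-r\geq-\delta)$, i.e. the two-sided bound $|r|\leq\delta$. Proposition~\ref{overap} then gives $\llbracket\flow\rrbracket\subseteq\llbracket\,|r|\leq\delta\,\rrbracket$ directly.

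The only point requiring a moment of care is the mismatch between the non-strict bound $|r|\leq\delta$ obtained from the formal weakening and the strict bound $|r|<\delta$ used in the definition of $\flow^{\delta}$. This is harmless: on $\llbracket\flow\rrbracket$ the residual is not merely small but identically zero, so the strict inequality holds there as well, which is exactly why the direct argument above already yields the strict form. I do not expect any genuine obstacle; the entire content is the bookkeeping of the equality-to-inequality normalization and this $\leq$-versus-$<$ reconciliation.
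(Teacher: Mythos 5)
Your proof is correct and matches the paper's (implicit) argument: the paper states this proposition without proof, as an immediate consequence of the explicit forms of $\flow$ and $\flow^{\delta}$ and of Proposition~\ref{overap}, which is exactly the content of your direct pointwise argument. Your careful reconciliation of the non-strict bound produced by the formal $\delta$-weakening with the strict bound in the paper's stated $\flow^{\delta}$ is a reasonable piece of bookkeeping that the paper glosses over, but it does not change the approach.
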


\subsection{Complexity of Lyapunov Stability}

We first study stability in the sense of Lyapunov, which we can write stable i.s.L. Following standard definition, a system is stable i.s.L. if given any $\varepsilon$, there exists $\delta$ such that for any initial value $x_0$ that is within $\delta$ from the origin, the system stays in $\varepsilon$-distance from the origin. The $\lrf$-representation of stability in the sense of Lyapunov is naturally the following formula. 
\begin{definition}[{\sf L\_stable}]
We encode conditions for Lyapunov stability with the formula {\sf L\_stable} as follows. 
\begin{eqnarray*}
& &\forall^{[0,\infty)} \varepsilon\exists^{[0,\varepsilon]} \delta \forall^{[0,\infty)} t\forall x_0\forall x_t .\; (||x_0||<\delta \wedge x_t = \int_0^t f(s)ds + x_0 )\rightarrow ||x_t||<\varepsilon.
\end{eqnarray*}
The {\em bounded form} of {\sf L\_stable} is defined by bounding the quantifiers in the formula as follows:
\begin{eqnarray*}
& &\forall^{[0, e]} \varepsilon\exists^{[0,\varepsilon]} \delta \forall^{[0,T]} t\forall^X x_0\forall^X x_t. \;(||x_0||<\delta \wedge x_t = \int_0^t f(s)ds + x_0 )\rightarrow ||x_t||<\varepsilon, 
\end{eqnarray*}
where $e, T\in \mathbb{R}^+$ and $X$ is a compact set.
\end{definition}

It is not hard to see that the formula encodes the definition of stability in the sense of Lyapunov. 
\begin{proposition}
The origin is a stable equilibrium point iff {\sf L\_stable} is true. 
\end{proposition}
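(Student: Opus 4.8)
The plan is to verify that the formula {\sf L\_stable} is logically equivalent to the textbook definition of Lyapunov stability once we interpret the $\lrf$-atoms over the structure $\mathbb{R}_{\mathcal{F}}$. The argument is a direct unwinding of definitions, so I would structure it as a two-way implication: first show that if the origin is a stable equilibrium point (in the classical sense) then {\sf L\_stable} evaluates to true, and conversely that truth of {\sf L\_stable} forces the classical stability condition to hold.

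First I would fix notation by recalling the classical statement: the origin is stable i.s.L.\ iff for every $\varepsilon>0$ there exists $\delta>0$ such that every trajectory $x(\cdot)$ with $\|x(0)\|<\delta$ satisfies $\|x(t)\|<\varepsilon$ for all $t\geq 0$. The key bridge is that, by the Picard--Lindel\"of proposition established earlier, a point $x_t$ lies on the trajectory through $x_0$ at time $t$ precisely when the atom $x_t = \int_0^t f(x(s))\,\der s + x_0$ holds in $\mathbb{R}_{\mathcal{F}}$. I would emphasize that this $\flow$-atom is exactly the $\lrf$-representation of the system, so quantifying over all $x_0,x_t,t$ satisfying it is the same as quantifying over all points on all trajectories.

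For the forward direction, assuming classical stability, I would take an arbitrary $\varepsilon$ in $[0,e]$, produce the classical $\delta$ (noting the bound $\delta\leq\varepsilon$ is harmless since shrinking $\delta$ preserves the stability witness, so the range $[0,\varepsilon]$ is legitimate), and then observe that for any $t$, $x_0$, $x_t$ making the antecedent $\|x_0\|<\delta \wedge x_t=\int_0^t f(s)\,\der s + x_0$ true, the point $x_t$ is on a trajectory starting within $\delta$ of the origin, whence $\|x_t\|<\varepsilon$ by hypothesis; this establishes the matrix of the formula, and the universal/existential quantifier structure matches the classical "$\forall\varepsilon\,\exists\delta\,\forall$" shape. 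For the converse, given that {\sf L\_stable} is true, I would read off the existential witness $\delta$ for each $\varepsilon$ and, running the $\flow$-atom equivalence in the other direction, conclude that every genuine trajectory starting within $\delta$ stays within $\varepsilon$, which is the classical condition.

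The main obstacle is not in the logic but in the cosmetic mismatches between the formula and the classical definition. Two points need care: the restriction of $\delta$ to the interval $[0,\varepsilon]$ rather than $(0,\infty)$, which I would justify by noting that any valid stability witness remains valid after shrinking, and the use of strict versus non-strict quantifier ranges starting at $0$ (the value $\varepsilon=0$ or $\delta=0$ contributes only vacuously satisfiable instances). In the bounded form I would additionally remark that restricting $\varepsilon\in[0,e]$, $t\in[0,T]$, and $x_0,x_t\in X$ captures the classical notion only on the bounded region, which is the intended reading of \emph{bounded} stability and is the version whose complexity is analyzed subsequently; I would flag that the equivalence with the \emph{unbounded} classical definition holds exactly for the unbounded formula and only approximately (on the compact region) for the bounded one.
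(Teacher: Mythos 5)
The paper offers no proof of this proposition (it is asserted with ``it is not hard to see''), so your definitional unwinding is the natural argument and is essentially what the authors intend: identify the atom $x_t = \int_0^t f(x(s))\,\der s + x_0$ with membership of $x_t$ in the trajectory through $x_0$ via the Picard--Lindel\"of representation, check that the quantifier prefix matches the classical $\forall\varepsilon\,\exists\delta\,\forall$ shape, and observe that the restriction $\delta\le\varepsilon$ costs nothing because a stability witness may always be shrunk. Your closing remark that the bounded form only captures stability on the compact region, and that the genuine equivalence is with the unbounded formula, is also a point the paper leaves implicit and is worth stating.

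There is, however, one place where your argument as written fails, and it is precisely the endpoint issue you flag and then dismiss. You say that $\delta=0$ ``contributes only vacuously satisfiable instances'' and is therefore harmless. It is harmless for the forward direction, but it destroys the converse: since the quantifier is $\exists^{[0,\varepsilon]}\delta$ with a \emph{closed} lower endpoint, the value $\delta=0$ is an admissible witness, and under $\delta=0$ the antecedent $\|x_0\|<\delta$ is unsatisfiable, so the matrix is vacuously true. Hence {\sf L\_stable} as literally written is true for \emph{every} system, stable or not, and the ``only if'' direction of the proposition cannot be read off from it. The equivalence holds only under the reading $\varepsilon\in(0,\infty)$, $\delta\in(0,\varepsilon]$ (or after adding the conjunct $\delta>0$ inside the existential); note also that once $\delta>0$ is enforced you must simultaneously exclude $\varepsilon=0$, or the forward direction breaks instead, since no positive $\delta$ can force $\|x_t\|<0$. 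Your proof should state this corrected reading explicitly and carry out the converse with a strictly positive witness, rather than treating the degenerate endpoints as cosmetic.
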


We can now define the $\delta$-stability problem using the $\lrf$-representation.  
\begin{definition}[$\delta$-Stability i.s.L.]\label{sl}
The $\delta$-stability problem i.s.L. asks for one of the following answers:
\begin{itemize}
\item {\sf stable}: The system is stable i.s.L. ({\sf L\_stable} is true). 
\item {\sf $\delta$-unstable}: Some $\delta$-perturbation of {\sf L\_stable} is false. 
\end{itemize}
We defined the {\em bounded} $\delta$-stability problem by replacing {\sf L\_stable} with the bounded form of {\sf L\_stable} in the definition. 
\end{definition}

Now, using the complexity of the formulas, we have the following complexity results for the bounded version of Lyapunov stability. 
\begin{theorem}[Complexity]
Suppose all terms in the $\lrf$-representation of a system are in Type 2 complexity class $\mathsf{C}$.  Then the bounded $\delta$-stability problem i.s.L. resides in complexity class $\mathsf{(\Pi^P_3)^C}$. 
\end{theorem}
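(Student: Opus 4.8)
The plan is to read the result off the quantifier structure of the bounded {\sf L\_stable} formula and invoke the complexity bound for $\delta$-decisions (Theorem~\ref{compmain}). First I would put the bounded form of {\sf L\_stable} into prenex normal form. Expanding the bounded quantifiers according to their definitions,
$$\forall^{[0,e]}\varepsilon\,\exists^{[0,\varepsilon]}\delta\,\forall^{[0,T]}t\,\forall^{X}x_0\,\forall^{X}x_t\;\psi,$$
each bounded quantifier contributes only a conjunct or an antecedent built from $\lrf$-terms (the bounding terms $0,e,\varepsilon,T$ and the membership predicate for the compact set $X$), so no additional quantifiers are introduced and the bounding terms $u_i,v_i$ involve only outer variables. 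The resulting prenex string is $\forall\exists\forall\forall\forall$, whose maximal alternation, after merging the trailing block of universal quantifiers over $t,x_0,x_t$, is $\forall\exists\forall$. Hence the bounded {\sf L\_stable} sentence is a bounded $\Pi_3$-sentence.

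Next I would verify that every term appearing in the quantifier-free matrix lies in the Type 2 complexity class $\mathsf{C}$. The matrix is the implication $(||x_0||<\delta \wedge x_t = \int_0^t f(s)\,\der s + x_0)\rightarrow ||x_t||<\varepsilon$; placing it in the normal form for $\lrf$-formulas, its atoms are inequalities whose terms are composed from the flow term $\int_0^t f(s)\,\der s + x_0$, the norms $||x_0||$ and $||x_t||$, and the scalars $\delta,\varepsilon$. By hypothesis the terms of the $\lrf$-representation are in $\mathsf{C}$, while norm, subtraction, and comparison are polynomial-time computable by the proposition listing functions in $\mathsf{P_{C[0,1]}}$; since bounded compositions of $\mathsf{C}$-computable and polynomial-time functions remain in $\mathsf{C}$, the whole matrix is a Boolean combination of atoms whose terms are in $\mathsf{C}$. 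Thus the bounded {\sf L\_stable} sentence belongs to the class $S$ of $\Pi_3$-sentences whose terms lie in $\mathsf{C}$.

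Finally I would apply Theorem~\ref{compmain}. Although that theorem is phrased for bounded $\Sigma_n$-sentences, its proof is symmetric in the leading quantifier, so the $\delta$-decision problem for bounded $\Pi_n$-sentences with $\mathsf{C}$-terms lies in $\mathsf{(\Pi_n^P)^C}$; instantiating $n=3$ yields $\mathsf{(\Pi_3^P)^C}$. The $\delta$-stability problem i.s.L. of Definition~\ref{sl} is exactly the $\delta$-decision problem for this sentence, with {\sf stable} corresponding to the $\delta$-$\mathsf{True}$ verdict and {\sf $\delta$-unstable} to the $\mathsf{False}$ verdict; the soundness of this correspondence is supplied by Theorem~\ref{delta-decide} together with the implication $\varphi\rightarrow\varphi^\delta$ of Proposition~\ref{overap}. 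The main obstacle I anticipate is not the quantifier count but the bookkeeping of the second step: one must confirm that rewriting the flow constraint into genuinely $\lrf$-expressible atoms, and folding the norms and the implication into normal form, preserves membership in $\mathsf{C}$ and does not silently introduce an extra alternation through the bounding terms. Once this is checked, the bound $\mathsf{(\Pi_3^P)^C}$ follows immediately from Theorem~\ref{compmain}.
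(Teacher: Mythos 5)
Your proposal is correct and follows essentially the same route as the paper: identify the bounded {\sf L\_stable} sentence as a $\Pi_3$-sentence with $\mathsf{C}$-computable terms, observe that the $\delta$-stability problem is exactly its $\delta$-decision problem, and invoke Theorem~\ref{compmain}. You simply spell out details the paper leaves implicit (merging the trailing universal block, checking the matrix terms stay in $\mathsf{C}$, and the $\Sigma_n$-versus-$\Pi_n$ symmetry of Theorem~\ref{compmain}), which is a faithful elaboration rather than a different argument.
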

\begin{proof}
The $\lrf$-formula {\sf L\_stable} is a $\sigma_3$ formula. By Definition~\ref{sl}, the $\delta$-stability problem is equivalent to the $\delta$-decision problem of the formula {\sf L\_stable}.  Following Theorem~\ref{compmain}, we have that the complexity of the $\delta$-decision problem for the bounded form of {\sf L\_stable} is in $\mathsf{(\Pi^P_3)^C}$. Consequently, the bounded $\delta$-stability problem i.s.L. resides in $\mathsf{(\Pi^P_3)^C}$. 
\end{proof}

Following the complexity for Lipschitz-continuous ODEs, we have an upper bound for the complexity of a wide range of systems. 
\begin{corollary}
Suppose that in the system (\ref{ds}), $f$ is a Type 2 polynomial-time computable function. Then the bounded $\delta$-stability problem i.s.L. is in $\mathsf{PSPACE}$.  
\end{corollary}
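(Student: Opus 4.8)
The plan is to read off the Type 2 complexity class $\mathsf{C}$ of the terms occurring in the bounded {\sf L\_stable} formula, feed this into the preceding complexity theorem to land in $\mathsf{(\Pi^P_3)^C}$, and then collapse the relativized hierarchy down to $\mathsf{PSPACE}$. First I would inspect the terms appearing in the bounded $\lrf$-representation: these are the norms $||x_0||$ and $||x_t||$ together with the solution (integral) term $\int_0^t f(s)\,\der s + x_0$. The norms are fixed compositions of absolute value, multiplication, $\max$, and square root, all of which reside in $\mathsf{P_{C[0,1]}}$ by the earlier proposition, so they do not dominate. The governing term is the solution of the Lipschitz-continuous ODE. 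Since $f$ is assumed Type 2 polynomial-time computable, the right-hand side of the initial value problem is polynomial-time computable, and by the $\mathsf{PSPACE}$ result for ODE solving (the Kawamura proposition) the solution function is computable in $\mathsf{PSPACE_{C[0,1]}}$. Here I would note that the autonomous, $n$-dimensional, $[0,T]\times X$ setting reduces to the scalar $[0,1]$ form of that proposition after an affine rescaling of the time interval and the compact state set and a componentwise treatment, so the hypothesis genuinely applies. Hence every term lies in $\mathsf{C}=\mathsf{PSPACE_{C[0,1]}}$.

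With $\mathsf{C}=\mathsf{PSPACE}$ established, I would invoke the preceding complexity theorem (the $\mathsf{(\Pi^P_3)^C}$ bound for bounded $\delta$-stability i.s.L.) to conclude that the bounded $\delta$-stability problem i.s.L. lies in $\mathsf{(\Pi^P_3)^{PSPACE}}$. It then remains to show that this relativized class is just $\mathsf{PSPACE}$, which is where the real content of the corollary sits.

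The main obstacle is precisely this collapse, and I would handle it by a standard relativization argument exploiting $\mathsf{PH}\subseteq\mathsf{PSPACE}$ together with Savitch's theorem. The key point is that a $\mathsf{PSPACE}$ oracle need not be consulted as an external black box: any query to it can be answered directly within polynomial space by inlining the oracle's own $\mathsf{PSPACE}$ computation. Thus $\mathsf{P}^{\mathsf{PSPACE}}=\mathsf{PSPACE}$, and $\mathsf{NP}^{\mathsf{PSPACE}}\subseteq\mathsf{NPSPACE}=\mathsf{PSPACE}$ (likewise for $\mathsf{coNP}$), so that inductively $\mathsf{(\Sigma_k^P)^{PSPACE}}=\mathsf{(\Pi_k^P)^{PSPACE}}=\mathsf{PSPACE}$ for every $k$, and in particular $\mathsf{(\Pi^P_3)^{PSPACE}}=\mathsf{PSPACE}$. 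The one subtlety to check carefully is that the oracle supplied by the $\delta$-decision procedure is a Type 2 \emph{function} oracle (an ODE solver) rather than a plain language oracle; but the procedure only ever queries it at dyadic points to polynomially bounded precision, and each such evaluation costs polynomial space, so the inlining step goes through unchanged. Combining the two displayed facts yields that the bounded $\delta$-stability problem i.s.L. is in $\mathsf{PSPACE}$, as claimed.
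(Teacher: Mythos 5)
Your proposal is correct and follows the same route as the paper: the paper's own proof simply observes that the $\flow$ term can be evaluated in $\mathsf{PSPACE}$ (via the cited result on Lipschitz ODE solutions with polynomial-time computable right-hand sides) and that $\mathsf{(\Pi^P_3)^{PSPACE}}\subseteq\mathsf{PSPACE}$. You have merely filled in the details — the reduction to the Kawamura proposition and the standard oracle-inlining collapse — that the paper leaves implicit.
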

\begin{proof}
The $\lrf$-representation $\flow$ can be evaluated in $\mathsf{PSAPCE}$. Since $\mathsf{(\Pi^P_3)^{PSPACE}} \subseteq \mathsf{PSPACE}$, we know that the problem resides in $\mathsf{PSPACE}$. 
\end{proof}
We have mentioned that most of common functions and their compositions are polynomial-time computable: polynomials, trigonometric functions, exponential functions, etc. Consequently, for most nonlinear continuous systems of practical interest, the stability problem is in $\mathsf{PSPACE}$. 

The unbounded case involves testing the bounded formula for longer and longer time durations. Thus, it is still undecidable. We can obtain the degree of undecidability of the unbounded case from the logical encoding. 
\begin{theorem}
The unbounded Lyapunov $\delta$-stability problem is in $\Pi^0_1$. 
\end{theorem}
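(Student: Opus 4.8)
The plan is to realize the set of {\sf stable} instances as a $\Pi^0_1$ set by showing that its complement---the {\sf $\delta$-unstable} instances---is recursively enumerable, i.e.\ semi-decidable. The idea behind the hint ``testing the bounded formula for longer and longer time durations'' is that instability, unlike stability, can be \emph{confirmed} in finite time: if some $\delta$-perturbation of {\sf L\_stable} is false, this is witnessed by a single perturbed trajectory that leaves an $\varepsilon$-ball by some finite time. Concretely, I would search by dovetailing over all triples $(\varepsilon, x_0, T)$ with $\varepsilon\in\mathbb{Q}^+$, $x_0$ a dyadic rational point whose norm is bounded below in terms of the given perturbation $\delta$, and a time horizon $T\in\mathbb{N}$, and for each triple run the $\delta$-decision procedure on the \emph{bounded} $\lrf$-sentence asserting that the $\delta$-perturbed trajectory from $x_0$ exceeds $\varepsilon$ at some $t\le T$. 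Each such sentence is bounded, hence decidable by Theorem~\ref{compmain}, so the search is effective; if a witness exists it is found after finitely many steps, placing the {\sf $\delta$-unstable} set in $\Sigma^0_1$ and the {\sf stable} set in $\Pi^0_1$ by definition of the arithmetic hierarchy.

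The routine half is soundness of declaring {\sf stable} when no witness is ever found. Here I would use that restricting the time quantifier from $[0,\infty)$ to $[0,T]$ and the range of $\varepsilon$ from $[0,\infty)$ to $[0,e]$ only drops conjuncts, so the bounded form of {\sf L\_stable} is implied by the unbounded form, and Proposition~\ref{overap} propagates truth to the $\delta$-weakening. The same monotonicity shows that bounding $\varepsilon$ above by a fixed $e>0$ loses nothing, since a stability margin good for $\varepsilon=e$ is a fortiori good for all larger $\varepsilon$, and a fixed compact state set $X$ containing the relevant ball suffices to witness any escape. Consequently the only quantifier that genuinely ranges to infinity is $\forall^{[0,\infty)}t$, which is exactly why truncating it to the enumerated horizons $T\in\mathbb{N}$, and taking a single universal number-quantifier over $T$, is enough to capture the problem within $\Pi^0_1$.

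The hard part, and where I expect the main obstacle, is completeness of the search: I must show that whenever {\sf L\_stable} fails under a $\delta$-perturbation, the search actually locates a witness. The textbook definition of instability has the shape $\exists\varepsilon\,\forall\delta_0\,\exists x_0\,\exists t$, and the inner $\forall\delta_0$ is not directly semi-decidable; moreover the escape time of a trajectory from an initial ball of radius $\delta_0$ may tend to infinity as $\delta_0\to 0$ (as for a repelling linear equilibrium), so a naive truncation in time could wrongly report {\sf stable}. My plan to overcome this is to exploit the robustness supplied by the fixed perturbation $\delta$: because we only need \emph{some} $\delta$-perturbation to be false, it suffices to collapse $\forall\delta_0\,\exists x_0$ to a \emph{single} escaping initial point $x_0$ of norm bounded below (at the scale set by $\delta$), for which the escape time is finite and the $\forall t$ truncates to a finite horizon. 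Establishing this reduction is the crux: I would argue that continuity of the computable flow $\flow$ and the non-strict $\delta$-perturbed atoms make the set of escaping initial conditions open and, at the bounded-away-from-zero scale, nonempty precisely when the $\delta$-weakened {\sf L\_stable} is false, so that a dyadic witness $x_0$ and a finite $T$ always exist and are found by the search. Verifying that the $\delta$-perturbation indeed keeps the witnessing scale bounded away from the degenerate margin---so that the shrinking-margin counterexample is excluded---is the delicate step on which the whole $\Pi^0_1$ bound rests.
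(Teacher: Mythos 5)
Your proposal is correct and follows essentially the same route as the paper: the paper's proof simply runs the $\delta$-decision procedure on the bounded form of {\sf L\_stable} for increasingly large time bounds $T$, so that $\delta$-instability is confirmed at some finite $T$ while stability is the complementary all-$T$ condition, i.e.\ one unbounded universal number quantifier over a decidable predicate, hence $\Pi^0_1$. Your additional dovetailing over $(\varepsilon, x_0)$ and the closing discussion of witness extraction elaborate details the paper leaves entirely implicit, but the core argument---truncate the time quantifier and invoke decidability of the bounded $\delta$-decision problem---is identical.
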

\begin{proof}
We compute $\delta$-decisions of the bounded form of the formula {\sf L\_stable} for increasingly larger time bound $T$. If for any $T$ the formula is $\delta$-false, then the system is $\delta$-unstable. On the other hand, we will not be able to confirm that the system is stable as $T$ approaches infinity. Thus, the problem is in $\Pi^0_1$ of the arithmetic hierarchy. \end{proof}

\subsection{Complexiy of Asymptotic Stability}

Following standard terminology, we say a system is asymptotically stable if it is Lyapunov stable, and there exists some bound on the perturbation in the initial state such that the system will converge to the origin eventually. We now study the complexity of this problem. 

First, since asymptotic stability involves properties of the system at the limit, we need to be express that as an $\lrf$-formula, as follows. 
\begin{definition}
We define the following formula for $\lim_{x\rightarrow \infty}(f(x), c)$
$$\lim_{x\rightarrow \infty}(f(x), c) =_{df}\; \forall^{[0,\infty)} \varepsilon \exists^{[0,\infty)} x \forall^{[x,\infty)}x'  \; (|f(x) - c|<\varepsilon).$$
We can use the conventional notation $\lim_{x\rightarrow \infty} f(x) = c$. Also, for convergence at a point $a\in \mathbb{R}^+$, we define 
$$\lim_{x\rightarrow a}(f(x), c) =_{df}\; \forall^{[0,\infty)} \varepsilon \exists^{[0,\infty)} \delta \forall^{[a-\delta,a+\delta]}x  \; (|f(x) - c|<\varepsilon).$$
Note that here the quantification on $\varepsilon$ and $\delta$ can be easily bounded, since we do not need to consider $\varepsilon$ and $\delta$ that are very large. Although further parameterization on the bounds are needed, for notational simplicity we simply treat this formula as a bounded $\lrf$-formula. 
\end{definition}

Now, asymptotic stability is defined as:
\begin{definition}[{\sf A\_stable}]
We define {\sf A\_stable} to be the following $\lrf$-formula
\begin{eqnarray*}
& &\forall^{[0,\infty)} \varepsilon\exists^{[0,\varepsilon]} \delta\forall^{[0,\infty)} t\forall x_0\forall x_t\Big((||x_0||<\delta \wedge x_t = \int_0^t f(s)ds + x_0 )\rightarrow ||x_t||<\varepsilon\Big)\\
& &\hspace{2cm}\wedge \exists^{[0,\infty)} \delta'  \forall^{[0,\infty)} t\forall x_0\forall x_t\Big( (||x_0||<\delta'\wedge x_t = \int_0^t f(s)ds + x_0 )\rightarrow \lim_{t\rightarrow \infty} ||x_t|| = 0\Big). 
\end{eqnarray*}
The bounded form of {\sf A\_stable} is defined as:
\begin{eqnarray*}
& &\forall^{[0,e]} \varepsilon\exists^{[0,\varepsilon]} \delta\forall^{[0,T]} t\forall^X x_0\forall^X x_t\Big((||x_0||<\delta \wedge x_t = \int_0^t f(s)ds + x_0 )\rightarrow ||x_t||<\varepsilon\Big)\\
& &\hspace{2cm}\wedge \exists^{[0,d]} \delta'  \forall^{[0,T']} t\forall^X x_0\forall^X x_t\Big( (||x_0||<\delta'\wedge x_t = \int_0^t f(s)ds + x_0 )\rightarrow \lim_{t\rightarrow T'} ||x_t|| = 0\Big) 
\end{eqnarray*}
where $e,T,T',d\in \mathbb{R}^+$ and $X$ is a compact set. 
\end{definition}
\begin{proposition}
The origin is asymptotically stable for a system iff the formula {\sf A\_stable} is true. 
\end{proposition}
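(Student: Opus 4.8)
The plan is to prove the equivalence by unfolding {\sf A\_stable} and matching it clause-by-clause against the standard definition of asymptotic stability recalled in the text, namely that the origin is asymptotically stable iff it is Lyapunov stable \emph{and} attractive (some neighborhood of the origin from which every trajectory converges to the origin exists). Since {\sf A\_stable} is syntactically a conjunction $\varphi_{\mathrm{Lyap}}\wedge\varphi_{\mathrm{attr}}$, it suffices to show that $\varphi_{\mathrm{Lyap}}$ is true iff the origin is Lyapunov stable and that $\varphi_{\mathrm{attr}}$ is true iff the origin is attractive; the full equivalence then follows because a conjunction holds iff both conjuncts do.

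First I would dispatch the first conjunct. Observe that $\varphi_{\mathrm{Lyap}}$ is \emph{literally} the formula {\sf L\_stable}. Here the constraint $x_t=\int_0^t f(s)\,\der s + x_0$ is, by the Picard--Lindel\"of representation established earlier, exactly the statement that $x_t$ is the point reached at time $t$ along the trajectory issuing from $x_0$; since $f$ is Lipschitz this value is unique, so the universally quantified pair $(x_0,x_t)$ subject to this constraint ranges precisely over initial-condition/trajectory-point pairs. Hence $\varphi_{\mathrm{Lyap}}$ asserts verbatim that for every $\varepsilon$ there is $\delta$ with $\|x_0\|<\delta$ forcing $\|x_t\|<\varepsilon$ for all $t$, and by the Proposition already proved (the origin is a stable equilibrium iff {\sf L\_stable} is true) this holds iff the origin is Lyapunov stable.

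Next I would handle the attractivity conjunct $\varphi_{\mathrm{attr}}$. The leading $\exists^{[0,\infty)}\delta'$ posits a radius of a domain of attraction; the remaining quantifiers, together with the same flow constraint, are meant to say that every trajectory originating in the $\delta'$-ball converges to the origin, where convergence is expressed by the embedded predicate $\lim_{t\rightarrow\infty}\|x_t\|=0$. Unfolding that predicate through the limit definition given above turns it into the familiar statement $\forall\varepsilon\,\exists N\,\forall t'\ge N\,(\|x(t')\|<\varepsilon)$ about the trajectory's norm, so that $\varphi_{\mathrm{attr}}$ reads as ``there exists $\delta'>0$ such that every trajectory starting within $\delta'$ tends to the origin,'' which is the attractivity clause.

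The step demanding care is purely interpretive rather than quantitative: the time symbol $t$ occurs both under the outer universal quantifier and inside the limit operator, so I would first pin down the reading in which the embedded limit binds a \emph{fresh} time variable and, via the flow constraint, refers to the whole trajectory determined by the chosen $x_0$; only under this reading does the conjunct express attractivity of the $x_0$-trajectory rather than a degenerate condition. I would also record the harmless conventions already noted in the text --- that $\delta$ may be confined to $[0,\varepsilon]$ in the first conjunct without loss, and that the $\varepsilon$-quantifier inside the limit predicate may be bounded. Once these semantic matters are settled, each conjunct matches its clause in the definition, and the equivalence is immediate.
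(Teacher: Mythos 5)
Your proposal is correct and matches the paper's (implicit) justification: the paper states this proposition without proof, treating it as immediate from the fact that {\sf A\_stable} is by construction the conjunction of {\sf L\_stable} with a formula expressing attractivity, which is exactly the clause-by-clause unfolding you carry out. Your explicit handling of the variable-capture issue --- the symbol $t$ being bound both by the outer $\forall^{[0,\infty)}t$ and inside the $\lim_{t\rightarrow\infty}$ operator, which must be read as binding a fresh time variable ranging over the whole trajectory --- is a detail the paper glosses over, and resolving it the way you do is necessary for the equivalence to hold.
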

We can now define the $\delta$-stability problem using the $\lrf$-representation.  
\begin{definition}[Asymptotic $\delta$-Stability]\label{sl}
The $\delta$-stability problem i.s.A. asks for one of the following answers:
\begin{itemize}
\item {\sf stable}: The system is stable i.s.A. ({\sf A\_stable} is true). 
\item {\sf $\delta$-unstable}: Some $\delta$-perturbation of {\sf A\_stable} is false. 
\end{itemize}
We defined the {\em bounded} $\delta$-stability problem by replacing {\sf A\_stable} with the bounded form of {\sf A\_stable} in the definition. 
\end{definition}

We can now obtain complexity results for the problem. 
\begin{theorem}
Suppose all terms in the $\lrf$-representation of a system are in Type 2 complexity class $\mathsf{C}$. Then bounded asymptotic $\delta$-stability is in $\mathsf{{(\Sigma_4^P)}^C}$.
\end{theorem}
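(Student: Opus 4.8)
The plan is to mirror the proof of the Lyapunov case: first establish that the bounded form of {\sf A\_stable} is a bounded $\Sigma_4$-sentence, and then invoke Theorem~\ref{compmain} with $n=4$ to read off the bound $\mathsf{(\Sigma_4^P)^C}$. By the definition of asymptotic $\delta$-stability, answering the problem is exactly the $\delta$-decision problem for {\sf A\_stable}, so once the quantifier classification is settled the complexity bound is immediate.

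First I would put the bounded {\sf A\_stable} formula into prenex normal form and count its quantifier alternations. The formula is a conjunction $\varphi_1\wedge\varphi_2$. The first conjunct $\varphi_1$ is precisely the bounded {\sf L\_stable} formula, whose prefix $\forall^{[0,e]}\varepsilon\,\exists^{[0,\varepsilon]}\delta\,\forall^{[0,T]}t\,\forall^X x_0\,\forall^X x_t$ has alternation pattern $\forall\exists\forall$, making it a $\Pi_3$-sentence. The second conjunct $\varphi_2$ carries the outer prefix $\exists^{[0,d]}\delta'\,\forall^{[0,T']}t\,\forall^X x_0\,\forall^X x_t$ applied to an implication whose consequent is the convergence assertion $\lim_{t\rightarrow T'}\|x_t\|=0$; by the definition of the limit formula, this consequent contributes its own prefix $\forall\varepsilon\,\exists\delta\,\forall x$ of pattern $\forall\exists\forall$.

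The key step is to prenex $\varphi_2$ correctly. Since the antecedent $P\equiv(\|x_0\|<\delta' \wedge x_t=\int_0^t f(s)\der s + x_0)$ is quantifier-free and contains none of the variables $\varepsilon,\delta,x$ bound inside the limit, rewriting $P\rightarrow\lim$ as $\neg P\vee\lim$ lets the limit's quantifiers be pulled outward unchanged. Merging the three leading universals $t,x_0,x_t$ with the limit's leading $\forall\varepsilon$ into a single universal block collapses the prefix of $\varphi_2$ to $\exists\,\forall\,\exists\,\forall$, i.e.\ a $\Sigma_4$-sentence. Because $\Pi_3\subseteq\Sigma_4$, the first conjunct is also (equivalent to) a $\Sigma_4$-sentence, and since the $\Sigma_4$ prefix class is closed under conjunction --- one interleaves the two prefixes layer by layer, fusing same-type quantifier blocks, as in the elementary prenexing argument --- the whole formula {\sf A\_stable} is a bounded $\Sigma_4$-sentence. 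With the terms of the $\lrf$-representation in Type 2 complexity class $\mathsf{C}$, Theorem~\ref{compmain} then places the $\delta$-decision problem, and hence bounded asymptotic $\delta$-stability, in $\mathsf{(\Sigma_4^P)^C}$.

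I expect the main obstacle to be the careful bookkeeping of this prenexing step: justifying that the embedded limit quantifiers commute past the implication (which rests on the variable-freshness observation above), and verifying that interleaving the $\Pi_3$ conjunct with the $\Sigma_4$ conjunct does not push the alternation count up to $\Sigma_5$. Everything else is a routine appeal to Theorem~\ref{compmain} and to the reduction supplied by the definition of asymptotic $\delta$-stability.
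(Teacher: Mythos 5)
Your proposal is correct and follows essentially the same route as the paper: expand the embedded limit into its $\forall\exists\forall$ quantifier block, prenex the resulting formula, observe that the alternation pattern collapses to $\Sigma_4$, and invoke Theorem~\ref{compmain}. The paper simply displays the rearranged formula and asserts it is a $\Sigma_4$-formula, so your explicit bookkeeping of the prenexing and of the closure of the $\Sigma_4$ prefix class under conjunction with a $\Pi_3$ conjunct only makes the same argument more careful.
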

\begin{proof}
The complexity of the formula is higher than the one encoding Lyapunov stability, because of the quantification structure in the encoding of the limit. After rearranging the formula, we have
\begin{eqnarray*}
& &\forall^{[0,e]} \varepsilon\exists^{[0,\varepsilon]} \delta\forall^{[0,T]} t\forall^X x_0\forall^X x_t\Big((||x_0||<\delta \wedge x_t = \int_0^t f(s)ds + x_0 )\rightarrow ||x_t||<\varepsilon\Big)\\
& &\hspace{.5cm}\wedge\; \exists^{[0,d]} \delta'  \forall^{[0,T']} t\forall^X x_0\forall^X x_t\forall^{[0,e']} \varepsilon' \exists^{[0,d']} \delta'' \forall^{[-\delta'',+\delta'']}t \Big( (||x_0||<\delta'\wedge x_t = \int_0^t f(s)ds + x_0 )\rightarrow ||x_t|| < \varepsilon'\Big) 
\end{eqnarray*}
This is a $\Sigma_4$-formula. Following Theorem~\ref{compmain} we know that the problem resides in $\mathsf{{(\Sigma_4^P)}^C}$. 
\end{proof}
The degree of undecidability for the unbounded version is, however, different from Lyapunov stability. This is because we need to find the bound of perturbation that ensures the convergence to the origin. 
\begin{corollary}
Unbounded asymptotic $\delta$-stability is in $\mathsf{\Sigma^0_2}$. 
\end{corollary}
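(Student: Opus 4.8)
The plan is to mirror the algorithmic reading used for the unbounded Lyapunov case, but to account for the additional existential quantifier $\exists\delta'$ that searches for a perturbation bound guaranteeing convergence. As there, I would decide the problem by running $\delta$-decisions of the \emph{bounded} form of \textsf{A\_stable} for increasingly large time bounds, and read off the arithmetic complexity from the quantifier prefix. The observation driving the bound is that \textsf{A\_stable} is a conjunction of two parts: the Lyapunov conjunct $\forall\varepsilon\,\exists\delta\,\forall t\,(\cdots)$, and the asymptotic conjunct $\exists\delta'\,\forall t\,(\cdots\rightarrow\lim_{t\to\infty}\|x_t\|=0)$. I would bound the arithmetic complexity of each conjunct separately and then combine.

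First I would recall that the Lyapunov conjunct alone contributes only a $\Pi^0_1$ condition: by the argument establishing the unbounded Lyapunov case, its negation is witnessed by a single finite time bound $T$ at which the corresponding bounded $\delta$-decision returns $\delta$-\textsf{False}, so the set of systems failing it is $\Sigma^0_1$ and the condition itself is $\Pi^0_1$. Next, for the asymptotic conjunct, the key step is to expose the structure $\exists\delta'\,[\,\Pi^0_1\,]$. I would enumerate candidate rational perturbation bounds $\delta'\in\mathbb{Q}^+$; this is the outer existential, the ``bound of perturbation that ensures convergence'' referred to informally before the statement. For a \emph{fixed} $\delta'$, the requirement that every trajectory starting within $\delta'$ of the origin drives $\|x_t\|$ to $0$ is checked through the bounded endpoint-limit conjunct $\lim_{t\to T'}\|x_t\|=0$ via $\delta$-decision for all increasing time bounds $T'$. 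Each such bounded $\delta$-decision is a recursive predicate by Theorem~\ref{delta-decide}, and the verification for the fixed $\delta'$ is the universal quantification over all $T'$ of these recursive checks; hence it is $\Pi^0_1$. Prefixing the existential search $\exists\delta'$ over the enumerable rationals yields a $\Sigma^0_2$ condition.

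It then remains to combine the two conjuncts. The whole problem is the conjunction of a $\Pi^0_1$ condition (Lyapunov stability) and a $\Sigma^0_2$ condition (asymptotic convergence for some rational $\delta'$). Since $\Pi^0_1\subseteq\Sigma^0_2$ and $\Sigma^0_2$ is closed under conjunction, the unbounded asymptotic $\delta$-stability problem lies in $\Sigma^0_2$, which is the claimed bound.

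I expect the main obstacle to be justifying that the outer existential may be restricted to \emph{rational} $\delta'$ and that the family of bounded endpoint-limit $\delta$-decisions faithfully captures genuine convergence in the limit $T'\to\infty$: namely soundness (if all bounded checks pass for some rational $\delta'$ then the system is truly asymptotically $\delta$-stable) together with the existence of a rational witness $\delta'$ whenever the system is asymptotically stable. This is precisely the point at which the $\delta$-relaxation and the compactness of the admissible initial region $X$ must be used, since they are what allow the nominal $\forall\varepsilon'\,\exists T''\,\forall t$ structure of convergence to be collapsed, for each fixed $\delta'$, to the universal-over-$T'$ (i.e.\ $\Pi^0_1$) level assumed above; without this collapse the inner condition would sit higher in the hierarchy and the overall bound would degrade.
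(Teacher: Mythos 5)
Your proposal is correct and follows essentially the same route as the paper's proof: an existential enumeration of candidate bounds $\delta'$, each of which reduces to an unbounded ($\Pi^0_1$-style) search over time bounds as in the Lyapunov case, giving an $\exists\forall$ structure over recursive $\delta$-decision predicates and hence membership in $\Sigma^0_2$. The only difference is that you make explicit the conjunct decomposition, the closure of $\Sigma^0_2$ under conjunction with a $\Pi^0_1$ condition, and the rational-witness issue, all of which the paper's brief sketch leaves implicit.
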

\begin{proof}
In the formula {\sf A\_stable}, we need to incrementally search for a value for $\delta'$. Each of the value corresponds to an unbounded search for the time bound, which is similar to the case of Lyapunov complexity. Thus, we need to solve unbounded $\exists\forall$ quantification, which means the unbounded problem is in $\Sigma^0_2$ of the arithmetic hierarchy. 
\end{proof}
It is probably interesting to note that the problem $\p\neq \np$ has the same degree of undecidability. 

There is also the notion of ``asymptotic stability in the large," which ensures that for any perturbation on $x(0)$, the system will stabilize. The quantification turns out to be slightly different:
\begin{proposition}[Asymptotic Stability in the Large]
The origin is an asymptotically stable equilibrium point iff the following $\lrf$-formula is true
\begin{eqnarray*}
& &\forall^{[0,\infty)} \varepsilon\exists^{[0,\varepsilon]} \delta\forall^{[0,\infty)} t\forall x_0\forall x_t\Big((||x_0||<\delta \wedge x_t = \int_0^t f(s)ds + x_0 )\rightarrow ||x_t||<\varepsilon\Big)\\
& &\hspace{2cm}\wedge \forall^{[0,\infty)} \delta'  \forall^{[0,\infty)} t\forall x_0\forall x_t\Big( (||x_0||<\delta'\wedge x_t = \int_0^t f(s)ds + x_0 )\rightarrow \lim_{t\rightarrow \infty} ||x_t|| = 0\Big). 
\end{eqnarray*}
\end{proposition}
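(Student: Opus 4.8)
The plan is to verify the biconditional semantically, reducing it to the standard textbook characterization of asymptotic stability in the large, namely the conjunction of (i) Lyapunov stability of the origin and (ii) \emph{global} attractivity, i.e.\ convergence of \emph{every} trajectory to the origin. The displayed formula is itself a conjunction of two parts, so I would establish the correspondence one conjunct at a time, relying throughout on the already-established $\flow$-representation (Picard--Lindel\"of) and on the semantics of the $\lim$ predicate.

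First, the leading conjunct is syntactically identical to {\sf L\_stable}. Hence, by the proposition stating that the origin is a stable equilibrium point iff {\sf L\_stable} is true, this conjunct holds precisely when the origin is stable in the sense of Lyapunov. Nothing new is required here; I would simply invoke that result.

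Second, I would analyze the trailing conjunct. Unfolding the $\lim$ predicate according to its definition, the inner implication asserts that for every trajectory with $||x_0|| < \delta'$ we have $\lim_{t\to\infty}||x_t|| = 0$. The only point of substance is the reading of the \emph{universal} quantifier $\forall^{[0,\infty)}\delta'$, which is exactly what distinguishes this formula from the local {\sf A\_stable}. I would argue both directions. If the formula holds, then given any initial state $x_0 \in \mathbb{R}^n$ one may instantiate $\delta' := ||x_0|| + 1$, so that the guard $||x_0|| < \delta'$ fires and the trajectory through $x_0$ converges to the origin; since $x_0$ was arbitrary, every trajectory converges. Conversely, if every trajectory converges to the origin, then for an arbitrary value of $\delta'$ and any $x_0$ with $||x_0|| < \delta'$ the consequent already holds, so the conjunct is true. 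Thus the trailing conjunct is equivalent to global attractivity, the effect of ranging $\delta'$ over all of $[0,\infty)$ being precisely to quantify over all initial conditions.

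Combining the two equivalences, the displayed formula is true iff the origin is both Lyapunov stable and globally attractive, which is the definition of asymptotic stability in the large. I expect no genuine obstacle: the argument is a semantic unfolding resting on results already in hand. The one place calling for care is the reinterpretation of $\forall^{[0,\infty)}\delta'$ as a quantifier over all initial states, together with the observation that the conjunction is not redundant --- global attractivity alone does not entail Lyapunov stability, as classical counterexamples show --- so that both conjuncts are genuinely needed to capture the intended notion.
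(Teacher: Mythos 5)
Your proposal is correct and matches the intended justification: the paper states this proposition without any proof, and your semantic unfolding --- identifying the first conjunct with {\sf L\_stable} (hence Lyapunov stability) and reading $\forall^{[0,\infty)}\delta'$ as quantification over all initial conditions via the instantiation $\delta' := \|x_0\|+1$, yielding global attractivity --- is exactly the argument the authors leave implicit. Your added remark that the Lyapunov-stability conjunct is not redundant (global attractivity alone does not imply stability) is a worthwhile point the paper does not make.
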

Computationally, this is in fact a simpler task than asymptotic stability. We state the following result without duplicating the proofs. 
\begin{theorem}
Suppose all terms in the $\lrf$-representation of a system are in Type 2 complexity class $\mathsf{C}$. Then bounded asymptotic $\delta$-stability in the large is in $\mathsf{(\Pi^P_3)^C}$. The unbounded case resides in $\mathsf{\Pi^0_1}$.  
\end{theorem}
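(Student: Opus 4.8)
The plan is to reduce both claims to the quantifier structure of the ``in the large'' formula, following exactly the templates used for Lyapunov and for ordinary asymptotic stability, the only new ingredient being that the perturbation bound $\delta'$ is now quantified \emph{universally} rather than existentially. First I would expand the limit subformula $\lim_{t\rightarrow\infty}||x_t||=0$ using its definition, turning the second conjunct into prenex shape $\forall\exists\forall$. Since the leading block $\forall^{[0,\infty)}\delta'\,\forall t\,\forall x_0\,\forall x_t$ consists entirely of universal quantifiers, it absorbs the universal $\forall^{[0,\infty)}\varepsilon'$ produced by the limit, so the second conjunct is a $\Pi_3$ formula. The first conjunct (Lyapunov stability) is already $\forall\exists\forall$, hence $\Pi_3$. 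The routine step is then to prenex the conjunction: two prenex $\forall\exists\forall$ sentences can be merged after renaming apart, by pulling the two leading universal blocks together, then the two existential blocks, then the two innermost universal blocks, yielding a single $\forall\exists\forall$ sentence. Thus the bounded ``in the large'' formula is $\Pi_3$, and applying Theorem~\ref{compmain} to its bounded form gives membership in $\mathsf{(\Pi^P_3)^C}$. This is the crucial contrast with ordinary asymptotic stability, whose $\exists^{[0,\infty)}\delta'$ forces a $\Sigma_4$ structure.

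For the unbounded case I would follow the arguments used for the unbounded Lyapunov and asymptotic results. Bounded $\delta$-decision is decidable for any fixed horizons, so undecidability arises only from letting the time bounds (and the now-universal $\delta'$) range over $[0,\infty)$, with $\varepsilon$ and $\delta$ effectively bounded since only small values matter. The key observation is that, because $\delta'$ is universally quantified, there is no existential search for a perturbation bound; the block $\forall^{[0,\infty)}\delta'$ simply joins the universal time quantifiers. Hence every remaining unbounded alternation is universal, and any failure is witnessed at a finite horizon by a bounded $\delta$-decision. This makes the $\delta$-unstable instances recursively enumerable and the problem $\mathsf{\Pi^0_1}$ — whereas for ordinary asymptotic stability the existential $\exists^{[0,\infty)}\delta'$ contributes one alternation and pushes the problem to $\mathsf{\Sigma^0_2}$.

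The main obstacle lies in two places. First, I must verify that merging the two $\Pi_3$ conjuncts does not raise the alternation rank: this is routine prenexing, but it must respect the bounded-quantifier guards introduced by $\exists^{[u,v]}$ and $\forall^{[u,v]}$, so that the rewritten sentence is still a legitimate bounded $\lrf$-sentence to which Theorem~\ref{compmain} applies. Second, and more delicate, I must justify in the unbounded case that replacing $\exists^{[0,\infty)}\delta'$ by $\forall^{[0,\infty)}\delta'$ genuinely eliminates the existential layer responsible for the $\mathsf{\Sigma^0_2}$ bound, i.e. that folding the convergence requirement into the universal time sweep does not reintroduce a net existential unbounded search. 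I expect this step to rely on the same informal treatment of the convergence subformula used in the corollary for ordinary asymptotic stability, namely treating ``for all time the trajectory stays and remains within the target tolerance'' as a co-recursively-enumerable ($\mathsf{\Pi^0_1}$) predicate whose violations are detectable at a finite horizon.
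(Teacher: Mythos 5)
Your proposal is correct and follows exactly the route the paper intends: the paper states this theorem \emph{without} proof, explicitly deferring to the arguments for Lyapunov and ordinary asymptotic stability, and your reconstruction—observing that the universally quantified $\delta'$ lets the expanded limit merge with the Lyapunov conjunct into a single $\forall\exists\forall$ ($\Pi_3$) prenex form, then invoking Theorem~\ref{compmain} for the bounded case and the finite-horizon refutation argument for the unbounded $\mathsf{\Pi^0_1}$ bound—is precisely the analogy the authors invoke. Your treatment of the convergence subformula matches the paper's own (informal) handling in the asymptotic-stability corollary, so the proposal is at the same level of rigor as the proofs the paper does supply.
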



\subsection{Complexity of Lyapunov Methods}

We show that Lyapunov methods reduce the complexity of stability problems. We only discuss the first-order encodings of the problems, in which a Lyapunov function is considered with a template function with unspecified parameters. 

\begin{proposition} Consider the dynamical system (\ref{ds}). Let $V(p,x)$ be a function, parameterized by $p$, whose partial derivative ${\partial V}/{\partial x}$ is a Type 2 computable function. Let $D$ be the parameter space for $p$ and $X$ be the state space of $x$. We then have
\begin{itemize}
\item The following $\lrf$-formula is a sufficient condition for stability in the sense of Lyapunov
$$\exists p^D\forall^X x\; \bigg(V(p,0)=0\wedge (x\neq 0\rightarrow V(p,x)>0)\wedge\frac{\partial V(p,x)}{\partial x}f(x)\leq 0\bigg)$$
\item The following is a sufficient condition for asymptotic stability:
\begin{eqnarray*}
\exists p^D\forall^X x\;\bigg(V(p,0)=0\wedge (x\neq 0\rightarrow V(p,x)>0)\wedge\Big(x=0\rightarrow\frac{\partial V(p,x)}{\partial x}f(x)= 0\bigg)
\wedge \bigg(x\neq 0 \rightarrow \frac{\partial V(p,x)}{\partial x}f(x)< 0\Big)\bigg)
\end{eqnarray*}
\end{itemize}
\end{proposition}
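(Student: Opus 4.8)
The plan is to recognize these two formulas as the $\lrf$-encodings of the classical direct (second) method of Lyapunov, and to show that each is \emph{sound}: the truth of the formula implies the corresponding stability property. Throughout I assume, as is implicit in the setup, that the origin is an equilibrium, $f(0)=0$. The single bridge connecting the \emph{static} formula to the \emph{dynamic} behaviour of (\ref{ds}) is the chain rule: along any trajectory $x(t)$ of the system,
$$\frac{\der}{\der t}V(p,x(t)) = \left.\frac{\partial V(p,x)}{\partial x}\right|_{x=x(t)} f(x(t)),$$
so the term $\frac{\partial V(p,x)}{\partial x}f(x)$ appearing in both formulas is exactly the time derivative of $V$ along solutions. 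Hence if the formula is true, the witnessing parameter $p$ yields a function $V(p,\cdot)$ that is positive definite on the (compact) state space $X$ and whose derivative along trajectories is (semi-)definite in the required sense; it is a Lyapunov function in the classical sense, and the remainder is the textbook argument recast for $\lrf$-semantics.

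For the first item (stability i.s.L.), I would argue that the witnessing $V$ makes {\sf L\_stable} true. Fix any $\varepsilon$. The sphere $\{x\in X : \|x\|=\varepsilon\}$ is compact and $V(p,\cdot)$ is continuous and strictly positive on it (by the clause $x\neq 0\rightarrow V(p,x)>0$), so $m := \min_{\|x\|=\varepsilon}V(p,x)>0$ is attained. Since $V(p,0)=0$ and $V$ is continuous, I can choose $\delta\in(0,\varepsilon]$ small enough that $\|x_0\|<\delta$ forces $V(p,x_0)<m$. The clause $\frac{\partial V}{\partial x}f(x)\le 0$ makes $t\mapsto V(p,x(t))$ non-increasing, so $V(p,x(t))\le V(p,x_0)<m$ for all $t\ge 0$. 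A trajectory starting strictly inside the $\varepsilon$-ball can therefore never touch the sphere (where $V\ge m$), and by continuity it stays in the open $\varepsilon$-ball: $\|x(t)\|<\varepsilon$ for all $t$. This is precisely the body of {\sf L\_stable}, so the formula's truth supplies the required $\delta$ for every $\varepsilon$.

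For the second item (asymptotic stability), the Lyapunov-stability half follows exactly as above from the first clauses, so it remains to exhibit a $\delta'$ guaranteeing $\lim_{t\to\infty}\|x(t)\|=0$. Take $\delta'=\delta$ from the stability argument, so trajectories remain in a fixed compact ball $\bar B_\varepsilon\subseteq X$. The strict clause $x\neq 0\rightarrow \frac{\partial V}{\partial x}f(x)<0$ makes $V(p,x(t))$ strictly decreasing whenever $x(t)\neq 0$; being bounded below by $0$, it converges to some $L\ge 0$. I claim $L=0$. If $L>0$, then by continuity and $V(p,0)=0$ the trajectory stays outside some ball $\|x\|\ge r>0$, hence in the compact annulus $A=\{r\le\|x\|\le\varepsilon\}$, where $\frac{\partial V}{\partial x}f(x)$ attains a maximum $-\gamma<0$; this gives $V(p,x(t))\le V(p,x_0)-\gamma t\to-\infty$, contradicting $V\ge 0$. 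Thus $L=0$, and positive definiteness of $V$ then forces $\|x(t)\|\to 0$: otherwise some subsequence would stay in an annulus where $V$ is bounded below by a positive constant, contradicting $V(p,x(t))\to 0$.

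The main obstacle is the convergence step for asymptotic stability, not the stability step. The delicate point is that a merely strict inequality $\dot V<0$ away from the origin does not by itself yield convergence; one must upgrade it to a \emph{uniform} negative bound $\dot V\le -\gamma$ on compact annuli bounded away from the origin, which is exactly where compactness and the continuity of $\frac{\partial V}{\partial x}f$ (a Type 2 computable, hence continuous, function) are essential. I would also flag two minor points to state cleanly rather than prove in detail: these are \emph{sufficient} (not necessary) conditions, so only the forward implication is claimed; and the encodings silently presume $f(0)=0$, so the clauses at $x=0$ are consistent with the dynamics.
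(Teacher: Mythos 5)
Your proof is correct: it is the classical direct-method argument (minimum of $V$ on the sphere $\|x\|=\varepsilon$ for stability; monotone convergence of $V$ along trajectories plus a uniform negative bound on $\dot V$ over compact annuli for attractivity), and you rightly flag the implicit hypotheses $f(0)=0$ and the $\varepsilon$-ball lying inside $X$. The paper states this proposition without proof, treating it as the standard Lyapunov theorem recast in $\lrf$-syntax, so your argument supplies exactly the omitted textbook reasoning and is consistent with the paper's intent.
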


\begin{definition}[$\delta$-Complete Lyapunov Test]
Let $V(p,x)$ be a proposed template for Lyapunov function. The $\delta$-complete Lyapunov test asks for one of the following answers:
\begin{itemize}
\item {\sf Success}: There exists an assignment to $p$ such that the Lyapunov function witness stability of the system. 
\item {\sf $\delta$-Fail}: The Lyapunov conditions fail under $\delta$-perturbations for all possible parameterizations of $V(p,x)$ in the parameter space $D$.
\end{itemize}
\end{definition}
\begin{theorem}
Suppose all terms in the $\lrf$-representation of the Lyapunov conditions are in Type 2 complexity class $\mathsf{C}$. The complexity of bounded $\delta$-complete Lyapunov methods is in $\mathsf{(\Sigma^P_2)^C}$. 
\end{theorem}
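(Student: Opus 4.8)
The plan is to reuse the template of the preceding complexity theorems: determine the quantifier-alternation depth of the $\lrf$-formula encoding the Lyapunov conditions, observe that the bounded $\delta$-complete Lyapunov test is exactly the $\delta$-decision problem for this bounded formula, and then apply Theorem~\ref{compmain} with the matching value of $n$.

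First I would put the sufficient condition for stability i.s.L.,
$$\exists p^D\forall^X x\; \bigg(V(p,0)=0\wedge (x\neq 0\rightarrow V(p,x)>0)\wedge\frac{\partial V(p,x)}{\partial x}f(x)\leq 0\bigg),$$
into the normal form required for $\lrf$-sentences. The equality $V(p,0)=0$ is rewritten as $V(p,0)\geq 0 \wedge -V(p,0)\geq 0$; the guard $x\neq 0$ is expressed by the atom $||x||>0$, so that the implication $x\neq 0\rightarrow V(p,x)>0$ becomes the disjunction $(-||x||\geq 0)\vee(V(p,x)>0)$. None of these rewritings introduces new quantifiers, so the prenex prefix remains $\exists p\,\forall x$: a single existential block followed by a single universal block. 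Since $D$ and $X$ are compact, both quantifiers are bounded, and hence the formula is a bounded $\Sigma_2$-sentence. The same rearrangement applied to the sufficient condition for asymptotic stability again leaves prefix $\exists p\,\forall x$, so it too is a bounded $\Sigma_2$-sentence.

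Next I would note that the bounded $\delta$-complete Lyapunov test is precisely the $\delta$-decision problem of Theorem~\ref{delta-decide} instantiated at this bounded $\Sigma_2$-sentence: the answer {\sf Success} is reported on the $\delta$-$\mathsf{True}$ verdict (a parameterization satisfying the $\delta$-relaxed conditions is found) and {\sf $\delta$-Fail} on the $\mathsf{False}$ verdict (no parameterization works), with the usual overlap in which either answer is admissible. Because the matrix is quantifier-free and, by hypothesis, every term occurring in it---in particular the composite $\frac{\partial V(p,x)}{\partial x}f(x)$ of the assumed-computable partial derivative with $f$---lies in the Type 2 complexity class $\mathsf{C}$, Theorem~\ref{compmain} applies with $n=2$ and immediately places the problem in $\mathsf{(\Sigma_2^P)^C}$.

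The only step needing attention---though it is not a deep obstacle---is confirming that the preprocessing into normal form neither raises the alternation depth nor moves any atom outside $\mathsf{C}$. The elimination of the equality and of the $x\neq 0$ guards only inserts additional Boolean connectives into the quantifier-free matrix, leaving the prefix untouched, and the term $\frac{\partial V(p,x)}{\partial x}f(x)$ stays in $\mathsf{C}$ since $\mathsf{C}$ is closed under the composition of its computable terms. With these checks in place the theorem is a direct instance of Theorem~\ref{compmain}, in exact parallel with the $\mathsf{(\Pi_3^P)^C}$ bound for {\sf L\_stable} and the $\mathsf{(\Sigma_4^P)^C}$ bound for {\sf A\_stable}, the reduction in complexity coming solely from the lower quantifier-alternation depth of the Lyapunov-function encoding.
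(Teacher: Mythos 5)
Your argument is correct and is exactly the intended one: the paper in fact states this theorem without any proof, and the proofs it does supply for the analogous $\mathsf{(\Pi^P_3)^C}$ and $\mathsf{(\Sigma^P_4)^C}$ bounds follow precisely your template of normalizing the formula, counting the bounded quantifier alternations ($\exists p\,\forall x$ gives a bounded $\Sigma_2$-sentence), and invoking Theorem~\ref{compmain}. The one small wrinkle is your pairing of verdicts: as the paper defines the test, {\sf Success} asserts the \emph{unperturbed} Lyapunov conditions and {\sf $\delta$-Fail} the failure of the $\delta$-weakened ones, which is the mirror image of the guarantees in Theorem~\ref{delta-decide} (so formally one runs the procedure on the negation, or uses a $\delta$-strengthening rather than a weakening); this does not change the alternation depth or the resulting complexity class, and the paper is equally loose on this point elsewhere.
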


It is clear that for the fully unbounded case (where both $D$ and $X$ are unbounded), undecidability comes from the search in larger and larger parameter and state space. 
\begin{corollary}
The unbounded $\delta$-complete Lyapunov test for an unbounded system is in $\mathsf{\Sigma_2^0}$. 
\end{corollary}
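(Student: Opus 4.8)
The plan is to read off the degree of undecidability directly from the quantifier structure of the Lyapunov condition, exactly as was done for the {\sf L\_stable} and {\sf A\_stable} formulas. The sufficient condition for stability is an $\exists p^{D}\forall^{X} x$ formula, and in the unbounded setting both the parameter space $D$ and the state space $X$ become unbounded. My claim is that each unbounded quantifier contributes one alternation in the arithmetic hierarchy: the unbounded universal check over $X$ yields a $\Pi^0_1$ predicate, and prefixing it with the unbounded existential search over $D$ yields $\Sigma^0_2$.

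First I would recall that by the preceding theorem, for any \emph{bounded} parameter region and \emph{bounded} state region the $\delta$-complete Lyapunov test lies in $\mathsf{(\Sigma^P_2)^C}$ and is therefore decidable; this is the decidable matrix on which the hierarchy classification rests. I then introduce two staging parameters: an integer $k$ bounding the parameter region $D_k$ and an integer $m$ bounding the state region $X_m$. For each pair $(k,m)$ the sentence $\exists p^{D_k}\,\forall^{X_m} x\,\psi(p,x)$ is a bounded $\lrf$-sentence whose $\delta$-decision is computable; call the resulting Boolean value $Q(k,m)$, a decidable predicate.

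Next I would handle the two unbounded quantifiers in turn. For a fixed parameter bound $k$, the test passes on $D_k$ only if some parameter in $D_k$ satisfies the inner condition for \emph{every} $x\in X$, i.e. for all state bounds $m$. Writing this as $\forall m\,Q(k,m)$ exhibits an unbounded universal search over the state horizon — precisely the situation of the unbounded Lyapunov case, which the earlier theorem placed in $\Pi^0_1$: we can refute the condition at some finite $m$ but can never confirm it as $m\to\infty$. Finally, success of the whole test amounts to finding \emph{some} bounded region $D_k$ on which this holds, i.e. $\exists k\,\forall m\,Q(k,m)$, an $\exists\forall$ prefix over a decidable matrix. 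This is exactly the $\Sigma^0_2$ form claimed, and it mirrors the asymptotic-stability corollary, where the outer existential search for $\delta'$ sat on top of an unbounded time check.

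The step I expect to be the main obstacle is justifying that $\forall m\,Q(k,m)$ faithfully captures ``there is a \emph{single} $p\in D_k$ good for all $x$'': literally, $\forall m\,Q(k,m)$ only guarantees, for each $m$, the existence of a parameter $p_m$ good on $X_m$, and these witnesses could a priori drift with $m$. Closing this gap requires a compactness argument on the bounded set $D_k$ together with the $\delta$-robustness of the test. Since the terms are computable and hence continuous, a convergent subsequence $p_{m_j}\to p^*$ passes its satisfied (in)equalities to the limit; the only delicate point is that strict inequalities such as $V(p,x)>0$ and $\frac{\partial V}{\partial x}f(x)<0$ degrade to non-strict ones under limits, which is exactly where the open $\delta$-margin supplied by the robust test keeps $p^*$ a genuine witness. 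Handling this carefully — and checking that restricting the existential search to a countable dense set of dyadic parameters loses nothing, again by $\delta$-robustness — is what pins the classification at $\Sigma^0_2$ rather than pushing it higher; the remaining bookkeeping parallels the $\Pi^0_1$ and $\Sigma^0_2$ arguments already given and need not be repeated.
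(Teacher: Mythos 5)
Your proposal is correct and follows the same route the paper intends: the paper offers no real proof of this corollary beyond the one-sentence remark that ``undecidability comes from the search in larger and larger parameter and state space,'' i.e.\ an unbounded existential over parameter-space bounds stacked on an unbounded universal over state-space bounds, with the decidable bounded $\delta$-complete test (the $\mathsf{(\Sigma^P_2)^C}$ theorem) as the matrix --- exactly your $\exists k\,\forall m\,Q(k,m)$ staging. Where you go beyond the paper is in flagging and closing the witness-drift gap: the paper silently identifies $\forall m\,Q(k,m)$ with ``a single $p\in D_k$ works for all $x$,'' and your compactness-plus-$\delta$-margin argument is the right repair within the $\delta$-decision semantics (note it is needed on the refutation side too: for the $\delta$-Fail answer one must extract, by covering the compact $D_k$ with finitely many open neighborhoods on which the weakened condition fails, a single finite state bound $m$ witnessing $\neg Q(k,m)$, which is what makes the $\Pi^0_1$ inner predicate genuinely refutable at a finite stage). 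One small point: the remark about restricting to dyadic parameters is not needed, since the existential over $p$ inside $D_k$ is already absorbed into the bounded $\delta$-decision and never appears as an unbounded arithmetic quantifier; only the integer indices $k$ and $m$ do.
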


\section{Stability of Hybrid Systems}

An important benefit of using logic formulas for describing systems is that discrete changes can be naturally represented.  Although the discrete components significantly complicates the $\lrf$-representations of the problems, they do not change the quantification structure of the encodings. Thus, we will see that the complexity upper bound of the continuous systems mostly carry over to the case of hybrid systems as well. On the other hand, it is indeed easier to show hardness results (lower-bound) using logical operations, and in this sense hybrid systems are intrinsically more complicated than continuous systems. 

\subsection{{\large$\lrf$}-Representations of Hybrid Systems}\label{language}

We first show that $\lrf$-formulas can concisely represent hybrid automata.
\begin{definition}\label{lrf-definition}
A hybrid automaton in $\lrf$-representation is a tuple
\begin{multline*}
H = \langle X, Q, \{{\flow}_q(\vec x, \vec y, t): q\in Q\},\{\inv_q(\vec x): q\in Q\},
\{\jump_{q\rightarrow q'}(\vec x, \vec y): q,q'\in Q\},\{\init_q(\vec x): q\in Q\}\rangle
\end{multline*}
where $X\subseteq \mathbb{R}^n$ for some $n\in \mathbb{N}$, $Q=\{q_1,...,q_m\}$ is a finite set of modes, and the other components are finite sets of quantifier-free $\lrf$-formulas.
\end{definition}
\begin{notation}
For any hybrid system $H$, we write $X(H)$, $\flow(H)$, etc. to denote its corresponding components.
\end{notation}
Almost all hybrid systems studied in the existing literature can be defined by restricting the set of functions $\mathcal{F}$ in the signature. For instance,
\begin{example}[Linear and Polynomial Hybrid Automata] Let $\mathcal{F}^{\mathrm{lin}} = \{+\}\cup \mathbb{Q}$ and $\mathcal{F}^{\mathrm{poly}}=\{\times\}\cup\mathcal{F}^{\mathrm{lin}}$. Rational numbers are considered as 0-ary functions. In existing literature, $H$ is a {\em linear hybrid automaton} if it has an $\mathcal{L}_{\mathbb{R}_{\mathcal{F}^{\mathrm{lin}}}}$-representation, and a {\em polynomial hybrid automaton} if it has an $\mathcal{L}_{\mathbb{R}_{\mathcal{F}^{\mathrm{poly}}}}$-representation.
\end{example}
\begin{example}[Nonlinear Bouncing Ball]
The bouncing ball is a standard hybrid system model. Its nonlinear version (with air drag) can be $\lrf$-represented as follows:
\begin{itemize}
\item $X = \mathbb{R}^2$ and $Q = \{q_u, q_d\}$. We use $q_u$ to represent bounce-back mode and $q_d$ the falling mode.
\item $\flow = \{\flow_{q_u}(x_0, v_0, x_t, v_t, t), \flow_{q_d}(x_0, v_0, x_t, v_t, t)\}$. We use $x$ to denote the height of the ball and $v$ its velocity. Instead of using time derivatives, we can directly write the flows as integrals over time, using $\lrf$-formulas:
\begin{itemize}
\item $\flow_{q_u}(x_0, v_0, x_t, v_t, t)$ defines the dynamics in the bounce-back phase:
$$(x_t = x_0 + \int_0^{t} v(s) ds) \wedge (v_t = v_0 + \int_0^t g(1-\beta v(s)^2) ds)$$
\item $\flow_{q_d}(x_0, v_0, x_t, v_t, t)$ defines the dynamics in the falling phase:
$$(x_t = x_0 + \int_0^{t} v(s) ds) \wedge (v_t = v_0 + \int_0^t g(1+\beta v(s)^2) ds)$$
\end{itemize}where
$\beta$ is a constant. Again, note that the integration terms define Type 2 computable functions.
\item $\jump = \{\jump_{q_u \rightarrow q_d} (x, v, x', v'), \jump_{q_d \rightarrow q_u} (x, v, x', v')\}$ where
\begin{itemize}
 \item $\jump_{q_u \rightarrow q_d} (x, v, x', v')$ is $(v= 0 \wedge x' = x \wedge v' = v)$.
\item $\jump_{q_d \rightarrow q_u} (x, v, x', v')$ is $(x= 0 \wedge v' = \alpha v\wedge x'=x)$,  for some constant $\alpha$.
\end{itemize}
\vspace{0.1cm}
\item $\init_{q_d}: (x=10 \wedge v=0)$ and $\init_{q_u}: \bot$.
\item $\inv_{q_d}: (x>=0 \wedge v>=0)$ and $\inv_{q_u}: (x>=0 \wedge v<=0)$.
\end{itemize}
\end{example}

Trajectories of hybrid systems combine continuous flows and discrete jumps. This motivates the use of a hybrid time domain, with which we can keep track of both the discrete changes and the duration of each continuous flow. A hybrid time domain is a sequence of closed intervals on the real line, and a hybrid trajectory is a mapping from the time domain to the Euclidean space. 

We now define $\delta$-perturbations on hybrid automata directly through perturbations on the logic formulas in their $\lrf$-representations. For any set $S$ of $\lrf$-formulas, we write $S^{\delta}$ to denote the set containing the $\delta$-perturbations of all elements of $S$.
\begin{definition}[$\delta$-Weakening of Hybrid Automata] Let $\delta\in\mathbb{Q}^+\cup\{0\}$ be arbitrary. Suppose
$$H = \langle X, Q, \flow, \jump, \inv, \init\rangle$$
is an $\lrf$-representation of hybrid system $H$. The {\em $\delta$-weakening} of $H$ is
$$H^{\delta} = \langle X, Q, \flow^{\delta}, \jump^{\delta}, \inv^{\delta}, \init^{\delta}\rangle$$
which is obtained by weakening all formulas in the $\lrf$-representations of $H$.
\end{definition}
\begin{example}
The $\delta$-weakening of the bouncing ball automaton is obtained by weakening the formulas in its description. For instance, $\flow_{q_u}^{\delta}(x_0, v_0, x_t, v_t, t)$ is
{$$|x_t - (x_0 + \int_0^{t} v(s) ds)|\leq \delta \wedge |v_t - (v_0 + \int_0^t g(1-\beta v(s)^2) ds))|\leq \delta$$}
and $\jump_{q_d \rightarrow q_u}^{\delta} (x, v, x', v')$ is {$$|x|\leq \delta \wedge |v' - \alpha v|\leq \delta \wedge |x'-x|\leq \delta.$$}
\end{example}
It is important to note that the notion of $\delta$-perturbations is a purely syntactic one, defined on the description of hybrid systems. Following Proposition~\ref{overap}, it can be easily seen that the syntactic perturbations correspond to semantic over-approximation of $H$ in the trajectory space.

\subsection{Complexity of Stability}
We now obtain complexity results for stability of hybrid systems. The main difference from the continuous systems is that the set of reachable states of a hybrid system requires a more complex encoding. However, we will see that they do not change the upper bound of the complexity, since the quantification structure does not change.  

First, we need to define a set of auxiliary formulas that will be important for ensuring that a particular mode is picked at a certain step.
\begin{definition}
Let $Q = \{q_1,...,q_m\}$ be a set of modes. For any $q\in Q$, and $i\in\mathbb{N}$, use  $b_{q}^i$ to represent a Boolean variable. We now define
$$\enforce_Q(q,i) = b^i_{q} \wedge \bigwedge_{p\in Q\setminus\{q\}}\neg b^{i}_{p}$$
$$\enforce_Q(q, q',i) = b^{i}_{q}\wedge \neg b^{i+1}_{q'} \wedge \bigwedge_{p\in Q\setminus\{q\}} \neg b^i_{p} \wedge \bigwedge_{p'\in Q\setminus\{q'\}} \neg b^{i+1}_{p'}$$
We omit the subscript $Q$ when the context is clear.\end{definition}
\begin{definition}[$k$-Step Reachable Set]
Suppose $H$ is invariant-free, and $U$ a subset of its state space represented by $\unsafe$. The $\lrf$-formula $\reach_{H,U}(k,M)$ is defined as:
\begin{eqnarray*}
& &\bigvee_{q\in Q} \Big(\init_{q}(\vec x_{0})\wedge \flow_{q}(\vec x_{0}, \vec x_{0}^t, t_0)\wedge \enforce(q,0)\wedge \forall^{[0,t_0]}t\forall^X\vec x\;(\flow_{q}(\vec x_{0}, \vec x, t)\rightarrow \inv_{q}(\vec x))\Big) \\
& &\wedge\; \bigwedge_{i=0}^{k-1}\bigg( \bigvee_{q, q'\in Q} \Big(\jump_{q\rightarrow q'}(\vec
x_{i}^t, \vec x_{i+1})\wedge \flow_{q'}(\vec x_{i+1}, \vec x_{i+1}^t, t_{i+1})\wedge \forall^{[0,t_{i+1}]}t\forall^X\vec x\;(\flow_{q'}(\vec x_{i+1}, \vec x,
t)\rightarrow \inv_{q'}(\vec x)) )\\
& &\hspace{9cm}\wedge \enforce(q,q',i)\wedge\enforce(q',i+1)\Big)\bigg)
\end{eqnarray*}
\end{definition}

\begin{proposition}[Hybrid Lyapunov Stability]
The origin is a stable equilibrium point if 
\begin{eqnarray*}
\forall^{[0,\infty)} \varepsilon\exists^{[0,\varepsilon]} \delta \forall^{[0,\infty)} t\forall x_0\forall x_t (||x_0||<\delta \wedge \reach(x_0,x_t,t) )\rightarrow ||x_t||<\varepsilon.
\end{eqnarray*}
\end{proposition}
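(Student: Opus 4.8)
The plan is to prove soundness of the encoding: assuming the displayed $\lrf$-formula holds, I would derive the standard $\varepsilon$--$\delta$ characterization of Lyapunov stability for the hybrid automaton $H$ directly. Since the statement only asserts sufficiency (the ``if'' direction), it suffices to show that every point lying on a genuine hybrid trajectory emanating from a small initial ball is captured as a model of the $\reach$ predicate, so that the $\varepsilon$-bound guaranteed by the formula transfers verbatim to the trajectory.

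First I would make precise the object ``$\reach(x_0,x_t,t)$'' appearing in the proposition. The formula $\reach_{H,U}(k,M)$ of the preceding definition describes states reachable in exactly $k$ discrete steps, so the predicate used in the stability formula should be read as the existential closure over the step count: $\reach(x_0,x_t,t)$ holds iff there is some $k\in\mathbb{N}$, intermediate states $\vec x_1,\dots,\vec x_k$, flow durations $t_0,\dots,t_k$ with $\sum_i t_i = t$, and mode-selection Booleans jointly satisfying all the $\flow_q$, $\jump_{q\to q'}$, $\init_q$, $\enforce$ and invariant-maintenance conjuncts of the $k$-step formula, and ending in $x_t$. I would record this as a lemma and then state the key \emph{reachability-fidelity} claim: a point $x_t$ lies on some hybrid trajectory of $H$ issuing from $x_0$, at a hybrid time whose total continuous duration is $t$, if and only if $\reach(x_0,x_t,t)$ is true. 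For the present ``if'' proposition only the direction ``on a trajectory $\Rightarrow$ $\reach$ holds'' is needed.

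The proof of fidelity is the technical heart and is carried out by induction on the number of jumps in the hybrid time domain. For the base case, a trajectory that has not yet jumped is a single continuous flow; I would use the $\flow_q$ representation (the $\lrf$-representation proposition for continuous systems) to witness the first disjunct of $\reach$, check that the subformula $\forall^{[0,t_0]}t\,\forall^X\vec x\,(\flow_q(\vec x_0,\vec x,t)\to\inv_q(\vec x))$ exactly encodes maintenance of the invariant throughout the flow, and verify that $\enforce(q,0)$ consistently selects the active mode. For the inductive step I would append one jump followed by a flow, matching the edge taken in the trajectory to the $\jump_{q\to q'}$ conjunct and the subsequent flow to $\flow_{q'}$, while the $\enforce$ gadget forces the mode sequence to be consistent across the splice; summing the durations $t_i$ recovers the total time $t$, completing the correspondence.

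With fidelity in hand the proposition follows immediately: given any $\varepsilon>0$, instantiate the outer $\forall^{[0,\infty)}\varepsilon$ to obtain the promised $\delta\in[0,\varepsilon]$; then for any initial state with $||x_0||<\delta$ and any point $x_t$ on a trajectory from $x_0$, fidelity yields $\reach(x_0,x_t,t)$, so the body of the formula gives $||x_t||<\varepsilon$, and since $\varepsilon$ was arbitrary this is exactly Lyapunov stability of the origin for $H$. The main obstacle I anticipate is not this final deduction but the fidelity lemma itself, specifically the bookkeeping that the syntactic $\enforce$ gadget together with the per-flow invariant-maintenance subformulas cut out \emph{all and only} the admissible mode sequences and invariant-respecting flows of the hybrid time domain. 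The delicate point is reconciling the single continuous time variable $t$ (and the outer ``$\forall^{[0,\infty)}t$'') with the existential unfolding over the unknown number of jumps $k$ hidden inside $\reach$, so that boundedness at \emph{every} hybrid time is genuinely certified.
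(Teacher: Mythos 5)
The paper offers no proof of this proposition at all: it is stated as an immediate consequence of the encoding, with the correctness of the $\reach$ predicate and the transcription of the $\varepsilon$--$\delta$ definition both taken for granted. Your proposal is correct in approach and in fact supplies exactly the content the paper leaves implicit. You rightly identify that the final deduction is a one-line unfolding of the quantifiers and that the entire burden sits in the \emph{reachability-fidelity} claim --- that every point on a genuine hybrid trajectory is a model of $\reach$ --- together with the bookkeeping needed to reconcile the fixed-$k$ formula $\reach_{H,U}(k,M)$ of the preceding definition with the three-argument $\reach(x_0,x_t,t)$ used in the stability formula (your reading as an existential closure over the step count $k$, with the flow durations summing to $t$, is the only sensible one, and your induction on the number of jumps is the standard way to establish it). Two small caveats: first, only the ``trajectory $\Rightarrow$ $\reach$'' direction of fidelity is needed here, as you note, but be aware that the converse direction would also be needed if one wanted the formula to be \emph{necessary} for stability, which the paper deliberately does not claim; second, the quantifier $\exists^{[0,\varepsilon]}\delta$ as written admits $\delta=0$, in which case the antecedent $||x_0||<\delta$ is vacuous and your final step yields nothing --- your deduction implicitly needs $\delta>0$, which is what the standard definition (and surely the authors' intent) requires, so you should either read the bound as $(0,\varepsilon]$ or flag this as a defect of the encoding inherited from the continuous-case formula {\sf L\_stable}.
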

\begin{proposition}[Asymptotic Stability]
The origin is an asymptotically stable equilibrium point if 
\begin{eqnarray*}
& &\forall^{[0,\infty)} \varepsilon\exists^{[0,\varepsilon]} \delta\forall^{[0,\infty)} t\forall x_0\forall x_t\Big((||x_0||<\delta \wedge \reach(x_0,x_t,t) )\rightarrow ||x_t||<\varepsilon\Big)\\
& &\wedge \exists^{[0,\infty)} \delta'  \forall^{[0,\infty)} t\forall x_0\forall x_t\Big( (||x_0||<\delta'\wedge \reach(x_0,x_t,t))\rightarrow \lim_{t\rightarrow \infty} ||x_t|| = 0\Big). 
\end{eqnarray*}
\end{proposition}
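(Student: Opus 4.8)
The plan is to reduce the claim to the standard two-part definition of asymptotic stability for hybrid trajectories, using the $\reach$ predicate as the hybrid analogue of the $\flow$ predicate from the continuous case. Concretely, I would show that whenever the displayed $\lrf$-sentence holds, the origin satisfies both (i) stability i.s.L.\ and (ii) attractivity, which together constitute asymptotic stability. The whole argument hinges on one semantic fact: that $\reach(x_0,x_t,t)$ is true exactly when $x_t$ lies on some hybrid trajectory issuing from $x_0$ whose total elapsed continuous time is $t$. Once that correspondence is in place, the two conjuncts of the formula translate directly into the two clauses of the definition.

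First I would make precise the reading of $\reach(x_0,x_t,t)$ as the existential closure of the $k$-step reachable-set formula $\reach_{H,U}(k,M)$ over the number of discrete steps $k$ and over intermediate durations $t_0,\dots,t_k$ whose sum equals $t$. I would then verify that this encoding is faithful to the hybrid time domain semantics: the initial disjunct selects a starting mode and an admissible first flow segment; the $\enforce(q,i)$ and $\enforce(q,q',i)$ conjuncts guarantee that exactly one mode is active at each index and that the mode entering a $\jump$ matches the mode leaving it; and each invariant subformula $\forall^{[0,t_i]}t\,\forall^X\vec x\,(\flow_{q'}(\vec x_{i+1},\vec x,t)\rightarrow\inv_{q'}(\vec x))$ forces the corresponding continuous flow to remain inside its mode invariant throughout the segment. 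Gluing the segments along the hybrid time domain then yields the equivalence between truth of $\reach$ and genuine hybrid reachability.

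With the correspondence established, the unpacking is routine. The first conjunct reads: for every $\varepsilon$ there is a $\delta\le\varepsilon$ such that every state $x_t$ reachable from an $x_0$ with $||x_0||<\delta$ satisfies $||x_t||<\varepsilon$; this is precisely Lyapunov stability for hybrid trajectories. The second conjunct supplies a single $\delta'$ such that every trajectory originating within $\delta'$ of the origin has $\lim_{t\rightarrow\infty}||x_t||=0$, which is exactly attractivity over the attracting ball of radius $\delta'$. Combining the two clauses gives asymptotic stability, so the sufficiency claim follows. Because the proposition asserts only sufficiency, I would not need to argue the converse, which would additionally require that $\reach$ captures \emph{all} trajectories rather than merely a reachable subset.

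The main obstacle is the faithfulness step. The delicate point is reconciling the discrete, step-indexed structure of $\reach$ --- in which the number of jumps $k$ is finite at each instance and the segment durations $t_0,\dots,t_k$ are separate variables --- with the single continuous time parameter $t$ appearing in the stability sentence. Two subtleties must be handled with care: ensuring that as $t\rightarrow\infty$ the existential over $k$ does not silently exclude trajectories with unboundedly many jumps, and in particular that Zeno behavior is accounted for by the chosen time-domain convention; and confirming that the invariant-free hypothesis under which $\reach_{H,U}(k,M)$ was defined is compatible with the invariant-enforcing conjuncts actually written into the formula. Once these bookkeeping issues are settled, the remaining steps are direct instantiations of the definitions.
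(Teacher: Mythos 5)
The paper states this proposition without any proof, so there is nothing to compare your argument against line by line; your strategy --- read $\reach$ as the existential closure over the step count $k$ of the $k$-step reachable-set formula, prove that this closure is faithful to the hybrid time-domain semantics, and then observe that the two conjuncts transcribe Lyapunov stability and attractivity respectively --- is the natural one and is surely what the authors intend. You also correctly identify the places where the paper is silently sloppy: the arity mismatch between $\reach(x_0,x_t,t)$ as used in the proposition and $\reach_{H,U}(k,M)$ as defined, the tension between the ``invariant-free'' hypothesis and the invariant-enforcing conjuncts, and the question of trajectories with unboundedly many jumps as $t\rightarrow\infty$. Flagging these as the real work is the right instinct.

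There is, however, one genuine logical error in your closing paragraph on sufficiency versus necessity, and it points in a direction that would sink the proof if you acted on it. You claim that because only the ``if'' direction is asserted, you need not show that $\reach$ captures \emph{all} trajectories, ``merely a reachable subset.'' This is backwards. The hypothesis of the proposition is a universally quantified implication over states satisfying $\reach$; the conclusion (asymptotic stability) is a statement about \emph{every} actual trajectory point. To pass from the former to the latter you must know that every actually reachable state at hybrid time $t$ from $x_0$ satisfies $\reach(x_0,x_t,t)$ --- that is, $\reach$ must \emph{over}-approximate (or exactly equal) the true reachable relation. If $\reach$ omitted even one genuine trajectory point, the formula could be true while that point escapes the $\varepsilon$-ball, and sufficiency would fail. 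It is the converse direction (stability implies the formula) that would additionally require $\reach$ to contain no spurious states. Since the faithfulness step you outline proves a full equivalence, your proof as written still goes through; but the misattribution matters, because a reader following your remark might prove only soundness of the encoding (no spurious states) and believe the sufficiency claim established, when in fact it is completeness (no missing states, including across the $k\rightarrow\infty$ and Zeno issues you raise) that carries the entire weight of the ``if'' direction.
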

The definition is $\delta$-stability is the same as in the continuous case. 
\begin{definition}[$\delta$-Stability]\label{sl}
The (Lyapunov or asymptotic) $\delta$-stability problem of hybrid systems asks for one of the following answers:
\begin{itemize}
\item {\sf stable}: The system is stable. 
\item {\sf $\delta$-unstable}: Some $\delta$-perturbation of the $\lrf$-representation of stability is false. 
\end{itemize}
\end{definition}

\begin{theorem} Suppose all terms in the $\lrf$-representation of stability are in Type 2 complexity class $\mathsf{C}$. We have 
\begin{itemize}
\item The bounded Lyapunov $\delta$-stability problem of hybrid systems is in $\mathsf{(\Pi^P_3)^C}$. The asymptotic $\delta$-stability of hybrid systems is in complexity class $\mathsf{(\Sigma^P_4)^C}$. 
\item The unbounded $\delta$-stability problem of hybrid systems is in $\mathsf{\Pi_1^0}$ and asymptotic $\delta$-stability is in $\mathsf{\Sigma_2^0}$. 
\end{itemize}
\end{theorem}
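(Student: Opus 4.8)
The plan is to derive every bound by transporting the corresponding continuous-system proof, the only change being that the quantifier-free flow atom $x_t=\int_0^t f(s)\der s+x_0$ is replaced by the reachability predicate $\reach(x_0,x_t,t)$; the whole argument rests on showing that this replacement does not add a leading quantifier alternation. First I would write out the Hybrid Lyapunov Stability formula and observe that its prefix (bounded as $\forall^{[0,e]}\varepsilon\,\exists^{[0,\varepsilon]}\delta\,\forall^{[0,T]}t\,\forall^X x_0\,\forall^X x_t$) is \emph{exactly} that of the bounded {\sf L\_stable} formula from the continuous case, and that $\reach$ occurs only inside the antecedent of the implication.

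The crux is the quantifier structure of $\reach_{H,U}(k,M)$. Its matrix is a Boolean combination (a disjunction over modes $q\in Q$, a conjunction over the $k$ steps) of the formulas $\init_q$, $\flow_q$, $\jump_{q\rightarrow q'}$ and the auxiliary $\enforce$ predicates, in which the intermediate states $\vec x_i,\vec x_i^t$, dwell times $t_i$, and mode-selection Booleans $b_q^i$ are all existentially quantified. Since $H$ is taken invariant-free in the definition of $\reach$, the embedded bounded-universal invariant checks $\forall^{[0,t_i]}t\,\forall^X\vec x\,(\flow_{q}\rightarrow\inv_{q})$ are vacuous and drop out, so $\reach$ is a purely existential ($\Sigma_1$) formula over bounded domains. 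The key observation is that $\reach$ appears \emph{positively} in the antecedent, hence \emph{negatively} in the sentence; upon prenexing, the implication $(\cdots\wedge\exists\vec w\,\phi)\rightarrow\cdots$ turns the existential block $\exists\vec w$ into a universal block that merges with the adjacent $\forall^{[0,T]}t\,\forall^X x_0\,\forall^X x_t$. The sentence therefore remains a bounded level-3 ($\Pi_3$) sentence, just as in the continuous case, and Theorem~\ref{compmain} (applied in its dual $\Pi_n$ form) places bounded Lyapunov $\delta$-stability in $\mathsf{(\Pi^P_3)^C}$, with $\mathsf C$ absorbing the cost of evaluating the $\flow$, $\jump$, $\init$, and norm terms. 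Running the same substitution through the {\sf A\_stable} encoding, with the limit expanded exactly as in the continuous proof, preserves the $\Sigma_4$ prefix and yields $\mathsf{(\Sigma^P_4)^C}$ for bounded asymptotic $\delta$-stability.

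For the unbounded cases I would reuse the semidecision arguments of the continuous section essentially verbatim. For unbounded Lyapunov $\delta$-stability I compute the $\delta$-decision of the bounded formula for increasing time bound $T$ (and increasing state bound $X$): $\delta$-instability is witnessed at some finite stage, while stability is never confirmed, so the problem sits in $\mathsf{\Pi^0_1}$. For unbounded asymptotic $\delta$-stability the outer search for a perturbation bound $\delta'$ is existential and each candidate requires an unbounded (universal) search over time, giving an $\exists\forall$ shape and membership in $\mathsf{\Sigma^0_2}$.

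The step deserving the most care, and the likely obstacle, is the quantifier-collapse claim: one must verify that the existential quantifiers introduced by the reachability encoding genuinely fold into the pre-existing universal block rather than opening a fresh alternation. This depends essentially on the polarity of $\reach$ inside the implication \emph{and} on the invariant-free hypothesis that annihilates the embedded universal invariant checks. If invariants are retained, those checks appear positively inside $\reach$, so after negation they become existentials and would push the bounds to $\mathsf{(\Pi^P_4)^C}$ and $\mathsf{(\Sigma^P_5)^C}$; I would then need a separate argument that the bounded universal invariant conditions can be absorbed into the trailing quantifier block (or that their alternation is dominated) in order to recover the stated level-3 / level-4 bounds.
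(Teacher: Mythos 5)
Your proposal follows exactly the route the paper intends: transport the continuous-case argument and observe that substituting $\reach$ for the flow atom does not change the leading quantifier alternation, so Theorem~\ref{compmain} gives the same $\mathsf{(\Pi^P_3)^C}$ and $\mathsf{(\Sigma^P_4)^C}$ bounds, with the unbounded cases handled by the same semidecision arguments as in the continuous section. The paper in fact records no proof of this theorem beyond the remark that the quantification structure is unchanged, so your polarity analysis of the existentials introduced by the reachability encoding --- and in particular your observation that the embedded universal invariant checks would, if retained, add an alternation and push the bounds to $\mathsf{(\Pi^P_4)^C}$ and $\mathsf{(\Sigma^P_5)^C}$ unless separately absorbed --- is a genuine and correct refinement of what the paper asserts, not a deviation from it.
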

From these results, it may seem that hybrid systems are not harder than continuous systems, in terms of the upper bounds on complexity. However, the discrete components of hybrid systems make it much easier to reach a high lower bound on the complexity. For instance, it is easy to show that the complexity results are tight in the following sense:
\begin{theorem}
Suppose the terms in describing the stability formulas are polynomial-time Type 2 computable. The bounded Lyapunov and asymptotic $\delta$-stability of hybrid systems are both $\mathsf{(\Pi^P_3)}$-complete. 
\end{theorem}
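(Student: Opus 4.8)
The statement asks for two-sided tightness, so I would treat membership (in $\mathsf{\Pi^P_3}$) and hardness ($\mathsf{\Pi^P_3}$-hardness) separately; only the hardness direction requires new work, and I would design a single reduction that witnesses it for both the Lyapunov and the asymptotic problem at once.

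\textbf{Membership.} When every term is polynomial-time Type 2 computable we have $\mathsf{C}=\mathsf{P}$: the function oracle answers each query in time polynomial in the query length, so it can be inlined into the base machine without leaving the class, and the relativized hierarchy collapses, $\mathsf{(\Pi^P_3)^P}=\mathsf{\Pi^P_3}$. Thus the previous hybrid theorem places the bounded Lyapunov problem in $\mathsf{\Pi^P_3}$ outright. For the asymptotic problem the upper bound is inherited from the same theorem; I would flag that its general bound there is $\mathsf{\Sigma^P_4}$, so matching $\mathsf{\Pi^P_3}$ membership is the one place where the asymptotic and Lyapunov analyses genuinely differ and which I would revisit (it requires that the bounded-horizon limit predicate be decidable in $\mathsf{P}$ and contribute no surviving quantifier alternation). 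The substantive step is then $\mathsf{\Pi^P_3}$-hardness.

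\textbf{Reduction.} The plan is to reduce from the canonical $\mathsf{\Pi^P_3}$-complete problem, truth of a quantified Boolean formula $\Phi=\forall \vec u\,\exists \vec v\,\forall \vec w\;\psi(\vec u,\vec v,\vec w)$ with $\psi$ in CNF. The prefix of {\sf L\_stable}, namely $\forall\varepsilon\,\exists\delta\,\forall t\,\forall x_0\,\forall x_t$, already has the $\forall\exists\forall$ shape, so I would align the three Boolean blocks with it: $\forall\vec u$ with the universal $\varepsilon$, $\exists\vec v$ with the existential $\delta$, and $\forall\vec w$ with the universally resolved nondeterminism of $\reach$. The last alignment is the clean one: the mode-selecting disjunctions $\bigvee_{q,q'}$ and the intermediate states of $\reach$ lie in the antecedent of the implication in {\sf L\_stable}, so after pushing negation they form a universal block; a chain of $|\vec w|$ two-way mode choices then realizes $\forall\vec w$ robustly, with no need to hide $\vec w$ inside the near-origin initial state. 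I would give the automaton trivial dynamics ($\dot x=0$, so each $\flow_q$ is the identity and is trivially polynomial-time computable) and use $\mathrm{poly}(|\Phi|)$ jumps to evaluate $\psi$ on the decoded Booleans, steering the reached $x_t$ to norm $0$ when $\psi$ holds and to norm $e+1$ when it fails, so that the consequent $\|x_t\|<\varepsilon$ of {\sf L\_stable} is satisfied exactly when $\psi$ is; the whole sentence then reduces to $\Phi$.

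\textbf{Decoding the scalars and $\delta$-robustness (the main obstacle).} The crux is encoding the vectors $\vec u,\vec v$ into the single reals $\varepsilon,\delta$ $\delta$-robustly while respecting $\delta\le\varepsilon\le e$. I would place the $\vec u$-codes as disjoint subintervals of $[e/2,e]$ and the $\vec v$-codes in $[e/4,e/2]$, so $\delta\le\varepsilon$ always holds and the automaton recovers each assignment by a fixed sequence of threshold comparisons (each a single $\lrf$-atom). Adjacent code intervals would be separated by guard bands of width exceeding $2\delta$; choosing $e=2^{|\vec u|+|\vec v|+O(1)}\delta$ keeps every interval wider than the margin at a cost of only $O(|\Phi|+\log(1/\delta))$ bits, so the reduction is polynomial. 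The delicate design point is that the don't-care convention must respect each block's polarity: on non-code values of the \emph{universal} $\varepsilon$ the gadget must report satisfaction (output $0$), whereas on non-code values of the \emph{existential} $\delta$ it must report violation (output $e+1$), so that no spurious existential witness arises. A case analysis then gives two-sided $\delta$-decidability: if $\Phi$ is true the unperturbed {\sf L\_stable} holds and the system is {\sf stable}; if $\Phi$ is false some interior $\varepsilon$ exposes a violation with margin $>\delta$, so {\sf L\_stable}$^{\delta}$ is false and the system is {\sf $\delta$-unstable}. Finally, to carry the same hardness to the asymptotic notion I would append a mild contraction toward the origin after the evaluation phase, so bounded trajectories also converge; then $\Phi$ true forces asymptotic stability and $\Phi$ false breaks even stability i.s.L. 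The hard part throughout is verifying that every comparison, gate, and guard-band margin exceeds $\delta$, which is precisely what certifies the required $\delta$-decision behaviour.
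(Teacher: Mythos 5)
The paper gives no actual proof of this theorem (it only remarks that ``logic formulas can be easily encoded as jumping conditions'' and omits the details), so your proposal cannot be compared against an official argument; it has to stand on its own, and its hardness direction does not. The fatal step is the decoding gadget: you propose that the automaton ``recovers each assignment by a fixed sequence of threshold comparisons'' on $\varepsilon$ and $\delta$. But $\varepsilon$ and $\delta$ are quantified variables of the sentence {\sf L\_stable}, not state variables of the hybrid system: the guards, resets and flows of $H$ are $\lrf$-formulas over $\vec x$ alone and are fixed before the stability sentence quantifies over $\varepsilon$ and $\delta$. The only atoms of {\sf L\_stable} that mention these variables are $\|x_0\|<\delta$ and $\|x_t\|<\varepsilon$, so the truth value of the matrix depends on $(\varepsilon,\delta)$ only through two monotone threshold tests. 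Consequently the whole sentence is equivalent to the single-limit statement $\lim_{\delta\to 0^+}\sup\{\|x_t\| : \|x_0\|<\delta,\ x_t\ \mbox{reachable}\}=0$: the $\forall\varepsilon\,\exists\delta$ prefix is semantically degenerate and cannot host a genuine $\forall\vec u\,\exists\vec v$ Boolean alternation, no matter how you lay out code intervals and guard bands inside $[0,e]$. Your third block (aligning $\forall\vec w$ with the universally resolved mode choices of $\reach$) is sound, but without a real source for the outer two alternations the reduction only establishes hardness at the level of the inner universal block, not $\mathsf{\Pi^P_3}$-hardness. Any repair has to find the existential block elsewhere (e.g.\ in existentially quantified dwell times or jump targets that survive in positive position), which is a different construction from the one you describe.

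Two further points. First, you are right to flag the membership mismatch for the asymptotic case: the paper's own upper bound there is $\mathsf{(\Sigma^P_4)^C}$, and nothing in your proposal (or in the paper) brings it down to $\mathsf{\Pi^P_3}$; as stated, the asymptotic half of the completeness claim is unsupported on the membership side as well, so honestly acknowledging this rather than asserting completeness would be the correct move. Second, be careful that the $\delta$ of the $\delta$-decision framework (the perturbation bound, fixed in advance) and the $\delta$ quantified inside {\sf L\_stable} are different objects; your guard-band estimate $e=2^{|\vec u|+|\vec v|+O(1)}\delta$ uses the former while your code intervals for $\vec v$ live on the latter, and conflating them obscures exactly the robustness accounting that the ``$\delta$-unstable with margin'' case of your final case analysis depends on.
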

The reason is that logic formulas can be easily encoded as jumping conditions of hybrid systems. It is then straightforward to reduce complete problems in the complexity class to stability problems of hybrid systems. We omit the full proof here. 

\section{Conclusion and Future Work}
We defined a framework for measuring the ``practical complexity" of stability problems for a wide range of nonlinear continuous and hybrid systems. To do so, we describe stability properties of systems as first-order formulas over the real numbers, and reduce stability problems to the $\delta$-decision problems of these formulas. The framework allows us to obtain a precise characterization of the complexity of different notions of stability that has not been discovered previously. We prove that bounded version of the stability problems are generally decidable, and give precise measure of the upper bound of their complexity. The unbounded versions are generally undecidable, for which we give precise measures of their degrees of undecidability. 

We believe the results serve as a basis for developing computational methods towards nonlinear and hybrid control techniques. An immediate next step is to use these methods to study other problems such as controllability and observability of nonlinear systems. On the other hand, the logical descriptions of the problems can directly guide the development of practical decision procedures for the problems. 

\bibliographystyle{abbrv}
\bibliography{tau}

\end{document}